\DeclareRobustCommand{\Fig}[1]{Figure~\ref{fig:#1}}
\DeclareRobustCommand{\Figs}[2]{Figure~\ref{fig:#1} and~\ref{fig:#2}}
\newtheorem{theorem}{Theorem}
\newtheorem{lemma}[theorem]{Lemma}
\theoremstyle{definition}
\theoremstyle{remark}
\newcommand{\numeral}[1]{%
  \textup{\uppercase\expandafter{\romannumeral#1}}%
}
\newcommand\reallywidehat[1]{%
\savestack{\tmpbox}{\stretchto{%
  \scaleto{%
    \scalerel*[\widthof{\ensuremath{#1}}]{\kern.1pt\mathchar"0362\kern.1pt}%
    {\rule{0ex}{\textheight}}
  }{\textheight}%
}{2.4ex}}%
\stackon[-6.9pt]{#1}{\tmpbox}%
}
\title{
Bayesian prognostic covariate adjustment
}
\begin{document}

\author[1]{David Walsh\thanks{dwalsh@unlearn.ai}}
\author[1]{Alejandro Schuler}
\author[1]{Diana Hall}
\author[1]{Jon Walsh}
\author[1]{Charles Fisher}
\affil[1]{Unlearn.AI, Inc., San Francisco, CA}

\author[]{for the Critical Path for Alzheimer's Disease\thanks{Data used in the preparation of this article were obtained from the Critical Path for Alzheimer's Disease (CPAD). As such, the investigators within CPAD contributed to the design and implementation of the CPAD database and/or provided data, but did not participate in the analysis of the data or the writing of this report.}}
\author[ ]{the Alzheimer's Disease Neuroimaging Initiative\thanks{Data used in preparation of this article were obtained from the Alzheimer’s Disease Neuroimaging Initiative (ADNI) database (\href{url}{adni.loni.usc.edu}). As such, the investigators within the ADNI contributed to the design and implementation of ADNI and/or provided data but did not participate in analysis or writing of this report. A complete listing of ADNI investigators can be found in \href{http://adni.loni.usc.edu/wp-content/uploads/how_to_apply/ADNI_Acknowledgement_List.pdf}{this document}.}}
\author[ ]{the Alzheimer's Disease Cooperative Study\thanks{Data used in preparation of this manuscript/publication/article were obtained from the University of California, San Diego Alzheimer’s Disease Cooperative Study. Consequently, the ADCS Core Directors contributed to the design and implementation of the ADCS and/or provided data but did not participate in analysis or writing of this report.}}

\date{December 2020}

\newcommand{\E}[1]{{\mathbb E} \left[#1\right]}
\newcommand{\V}[1]{{\mathbb V} \left[#1\right]}
\newcommand{\C}[1]{{\mathbb C} \left[#1\right]}
\newcommand{\iid}[0]{\overset{\text{IID}}{\sim}}
\newcommand{\Ehat}[1]{{\widehat{\mathbb E}} \left[#1\right]}
\newcommand{\Vhat}[1]{{\widehat{\mathbb V}} \left[#1\right]}
\newcommandx{\hattilde}[1]{\widehat{\widetilde{#1}}}
\newcommand{\nn}{\nonumber}

\maketitle

\begin{abstract}
Historical data about disease outcomes can be integrated into the analysis of clinical trials in many ways.  We build on existing literature that uses prognostic scores from a predictive model to increase the efficiency of treatment effect estimates via covariate adjustment.  Here we go further, utilizing a Bayesian framework that combines prognostic covariate adjustment with an empirical prior distribution learned from the predictive performances of the prognostic model on past trials. The Bayesian approach interpolates between prognostic covariate adjustment with strict type I error control when the prior is diffuse, and a single-arm trial when the prior is sharply peaked. This method is shown theoretically to offer a substantial increase in statistical power, while limiting the type I error rate under reasonable conditions. We demonstrate the utility of our method in simulations and with an analysis of a past Alzheimer's disease clinical trial.

\end{abstract}

\section{Introduction}
\label{sec:intro}

Randomized controlled trials (RCTs) are the gold standard for clinical research, because randomization blunts the impact of unobserved confounders \cite{Sox:2012hu, Overhage:2013fx, Hannan:2008gh}. The estimation of any treatment effects still carries some statistical uncertainty, but this may be reduced to a stated tolerance on the \textit{type I error} rate. Safe in that knowledge, practitioners may optimize the designs of their trials so as to maximize statistical \textit{power}.

If we consider the efficacy of the treatment as fixed, the primary determinants of statistical power are the sample sizes. Thus, the most straightforward method to increase power is to run a larger trial that randomizes more subjects into both the active treatment and placebo arms. However, trial costs and timelines typically scale with the number of subjects, making large trials economically and logistically challenging. Moreover, for diseases with high mortality, it can be unethical to recruit large numbers into the placebo arm \cite{FDA_placebo_cancer}.

Given these challenges, it is tempting to leave the RCT framework behind. So-called ``single-arm'' trials recruit patients only for the active treatment arm, and rely on external sources of placebo data such as electronic health records, patient registries, insurance claims, and clinical trials that have been run previously. Causal inference approaches such as matching or propensity score weighting may be used to construct a ``synthetic control group'', which is demographically similar to the trial subjects \cite{berry2017creating, abadie2010synthetic}, from these external sources. Alternatively, ``prognostic models'' are gaining popularity as another way of eliminating concurrent placebo groups \cite{Hansen:2008cw, aikens2019using, Wyss:2014ef}. These are machine learning models that have been trained on the external data to predict the trial subjects' hypothetical outcomes under the placebo, which we call their ``prognostic scores''. A treatment effect can be estimated by subtracting these prognostic scores from observed outcomes on the active treatment. Either way, these approaches allow for the estimation of treatment effects with only half as many subjects, none of whom must receive a placebo. However, one must make strict (often unverifiable) assumptions about which demographic covariates are important, or on the accuracy of the prognostic model, to ensure that results obtained using these methods are robust to unobserved confounders. Given the sensitivity to unknown-unknowns, single-arm trials are usually inappropriate for pivotal trials used to inform regulatory decisions, except in cases of dire need \cite{FDA_control_group}.

\subsection{Prognostic covariate adjustment}
\label{sec:intro_prog}

In contrast to single-arm trials with external controls, a variety of approaches that fall under the umbrella of ``historical borrowing'' aim to leverage information from external data sources to improve statistical power in clinical trials without eliminating randomization altogether. Such methods offer a compromise -- they aim to increase statistical power without adding too much bias. 

Recently, Schuler et al. \cite{freq-methods} proposed a method that exploits prognostic models to increase power while maintaining {\it strict} type I error rate control in the large-sample setting. They propose to run an RCT, in which a prognostic model trained on historical data is used to compute a prognostic score for each subject in the trial. Then, the prognostic scores are adjusted for as a covariate in a linear regression to estimate the treatment effect. This estimator maintains strict (asymptotic) control of the type I error rate even if the prognostic model is inaccurate, yet has minimum variance of all unbiased estimators if the prognostic model is perfect and the treatment effect is constant. Strict control of type I error is a major strength of prognostic covariate adjustment, but it limits the amount of power that could be gained over an ``unadjusted'' analysis (one that does not incorporate historical information at all).

Our novel method - Bayesian prognostic covariate adjustment - is a Bayesian analysis that draws on the strengths of the prognostic model approach. Where Schuler et al. \cite{freq-methods} use prognostic scores only as covariates on each trial subject, we recognize that these scores store historical information about population-level expected outcomes, which we can formalize as a prior distribution on parameters in the linear regression. Bayesian prognostic covariate adjustment interpolates between their frequentist approach, which makes no assumptions on the prognostic model, and a single-arm trial, which treats the prognostic scores as true placebo outcomes. Indeed, the strength of our chosen prior distribution is controlled through a single tuning parameter $\lambda$, and we recover these two cases at extreme values of $\lambda$.

Although Bayesian approaches are becoming more popular for clinical trials, the operating characteristics such as the type I and type II error rates remain important properties of clinical trials used to support regulatory decisions \cite{FDA_bayes}. Therefore, despite proposing a Bayesian method, we will analyze its frequentist properties both theoretically and empirically. We show how it achieves better power than prognostic covariate adjustment alone, while ensuring type I error control under weak assumptions.

\subsection{Bayesian clinical trials}
\label{sec:intro_bayes}

Bayesian approaches to historical borrowing in clinical trials are far from new. For instance, previous work has proposed approaches to fitting prior distributions for placebo outcomes directly from electronic health records~\cite{khozin2017real, berger2016opportunities}. In addition to the power gains, the appeal of these Bayesian methods is their straight-forward interpretability, as well the wealth of information contained in a posterior distribution that is absent from a frequentist p-value \cite{jack2012bayesian}.

One challenge is that, like single-arm trials, most of the Bayesian literature relies on unverifiable assumptions about the prior distribution to ensure robustness to confounding. For example, Psioda and Ibrahim \cite{psioda2019bayesian} note how treatment effects can be mis-estimated when the patients measured in the external data sources differ from the trial population in terms of their demographics or disease status. Even when the external data is relevant, choices about which clinical trials to run can introduce selection biases to the pool of relevant data~\cite{howard2005bayesian}. As with single-arm trials, some Bayesian methods attack confounding with causal inference techniques \cite{wang2019bayesian, lin2019propensity}. Others quantify the impact of the prior, assigning to it an ``effective sample size'' \cite{morita2010evaluating, wiesenfarth2020quantification}. When this effective sample size threatens to overwhelm the trial data, ``power priors'' may be used to downweight the external information until more desirable operating characteristics are obtained~\cite{ibrahim2000power,duan2006evaluating, hobbs2011hierarchical, schmidli2014robust}. However, these methods generally cannot increase power without inflating type I error rates~\cite{kopp2020}.

There is a large literature debating the relative merits of the Bayesian and frequentist approaches to inference. Regardless of the arguments for or against Bayesian methods in general, there is a practical matter of regulatory guidance for the analyses of clinical trials investigating new drugs and medical devices. Guidance from the United States Food and Drug Administration (FDA) specifically notes the need for a binary conclusion about a treatment's efficacy using a decision rule with ``reasonable control of the type I error rate''~\cite{FDA_bayes}. Many researchers have presented Bayesian definitions of type I error rates, which usually involve taking an average over a ``null'' prior distribution in which the treatment is deemed functionally ineffective~\cite{spiegelhalter1986predictive, brown1987projection, chen2011bayesian}. However, these do not directly address regulatory concerns about the frequentist properties of the decision rule. 

In contrast to other Bayesian methods, we provide reasonable control of the type I error rate under weak assumptions. Incorporating historical information through a prognostic score allows us to reduce the choice of prior distribution to a single parameter, $\lambda$, which reflects the anticipated magnitude of any bias in the prognostic model. The simplicity of this approach enables us to derive explicit formulae for the type I and type II error rates of a Bayesian decision rule. Moreover, the single parameter of the prior distribution can be estimated by measuring the prognostic model's performance on historical clinical trials. If $\lambda$ is chosen appropriately, we can control type I error under the strict null hypothesis of a zero treatment effect.

\subsection{Our contributions}
\label{sec:intro_summary}

We begin with the linear prognostic covariate adjustment formulation presented in Schuler et al.~\cite{freq-methods}, but reparameterize the linear model so that the intercept term, $\beta_0$, and the variance, $\sigma^2$, may be interpreted respectively as the bias and the residual variance of the prognostic model at predicting placebo outcomes for each trial patient. With this formulation, we focus on a single dimension, $\beta_0/\sigma$, that characterizes the average bias of the prognostic model scaled by the unexplained variation. We define a Normal Inverse-Gamma prior parameterized by a single parameter, $\lambda$, which represents an anticipated upper bound on the value of $\beta_0/\sigma$. That is, small values of $\lambda$ reflect a prognostic model with little bias, whereas large values of $\lambda$ reflect a prognostic model that may have a large bias. All other parameters in the regression are given non-informative priors. We prove that if $\lambda$ is chosen successfully, approximate type I error control is achieved.

Furthermore, we show that $n\lambda^2$ defines a natural scale for interpolating between the prognostic covariate adjustment method of Schuler et al.~\cite{freq-methods} and a single-arm trial. Specifically, we analyze Bayesian prognostic covariate adjustment in an asymptotic regime in which the trial sample size $n$ approaches infinity, and an abundance of historical data lets us take $\lambda \to 0$. When $n\lambda^2$ is large, the trial data overwhelms the prior and our method reduces to the prognostic covariate adjustment analysis with strict type I error rate control. On the other hand, as $n\lambda^2$ falls towards zero, we approach the single-arm scenario in which the prognostic score for each patient in the active treatment arm is assumed to represent their potential outcome in the counterfactual where they received a placebo. This single-arm limit maximizes power, but will inflate the type I error rate if the prognostic model is not perfect.

In this paper we define the analysis in Section~\ref{sec:methods} and explore the asymptotics in Section~\ref{sec:theory}.  We demonstrate that, in the asymptotic case of a large number of subjects and a narrow prior (small $\lambda$), the variance of the Bayesian estimator has a simple relationship to the frequentist prognostic covariate adjustment estimator. An empirical analysis of the theory is performed through simulations and the reanalysis of a past clinical trial in Alzheimer's disease in Section~\ref{sec:demonstration}. We close with a discussion of results in Section~\ref{sec:discussion}.

\section{Background}
\label{sec:prognostic}

This section reviews frequentist approaches that leverage prognostic scores. In Section \ref{sec:target}, we define notation for the prognostic model and how it may be applied to the ``target trial'': the study where we hope to improve inference through historical information. In Section \ref{sec:freq_summary}, we summarize the prognostic covariate adjustment work by Schuler et al.~\cite{freq-methods}, while in Section \ref{sec:single-arm} we define the opposite extreme, where the prognostic scores are assumed to represent placebo outcomes exactly. We define these two paradigms here, because in the remainder of this paper we will show how a Bayesian analysis can leverage additional historical information to interpolate between them, achieving a more practically meaningful tradeoff of type I error and power. As a benchmark for all historical borrowing methods, we define the ``unadjusted'' analysis in Section \ref{sec:unadjusted}, which disregards the prognostic scores entirely.

\subsection{Setup}
\label{sec:target}

For notational simplicity, we begin with some assumptions about the target:
\begin{itemize}
    \item The $n$ subjects are randomized such that a deterministic number, $pn$, are assigned to the active treatment group, while the remaining $(1-p)n$ subjects will receive the placebo. For each patient, a binary random variable $W$ indicates their treatment assignment ($W = 0$ for placebo and $W=1$ for active treatment). Thus the collection $(W_1, \dots, W_n)$ is sampled uniformly from the set of permutations of $pn$ ones and $(1-p)n$ zeros.
    \item A single, real-valued outcome, $Y$, is measured for each subject.
    \item Both prognostic covariate adjustment and the single-arm analysis leverage a fitted prognostic model, $\mathcal{M}$, which has been trained on historical placebo data, to optimize inference on the target trial. We are agnostic to the exact procedure used to fit this model. The prognostic model outputs a prognostic score, $M$, for each subject in the target trial. This score may be interpreted as a prediction of the patient's outcome, $Y$, given their baseline characteristics, in the event that $W=0$. We will assume that the prognostic model is deterministic, given its inputs. However, each $M$ is a random variable, reflecting variability in the baseline covariates of patients entering the study. We will assume that these variables are IID draws from some non-degenerate, square integrable distribution that is specific to the trial. These variables are independent of the treatment assignments, $W$.
    \item The disease progression of each subject is stochastic, with a distribution that depends on their baseline characteristics and their treatment assignments. Conditional on all $M_1, \dots, M_n, W_1, \dots, W_n$, we model the outcomes, $Y_1, \dots, Y_n$, as independent random variables. Their common distribution is given by:
    \begin{equation}
        \label{eq:lin_mod}
        Y_i | M_1, \dots, M_n, W_1, \dots, W_n \sim \beta_0 + \beta_1 W_i + \beta_2 (M_i - \bar{M}) + \bar{M} + N(0,\sigma^2)
    \end{equation}
    where $\bar{M} = \frac{1}{n}\sum_{i=1}^n M_i$ and $\beta_0, \beta_1, \beta_2$ and $\sigma^2$ are unknown parameters.
\end{itemize}

Within this structural model, we interpret
\begin{align}
    \label{eq:beta_0}
    \beta_0 &= E(Y_i|W_i = 0, M_1, \dots, M_n) - \bar{M}\\
    \label{eq:sigma^2}
    \sigma^2 &= Var(Y_i|W_i = 0, M_1, \dots, M_n),
\end{align}
respectively as the average bias and the residual variance of the prognostic model, $\mathcal{M}$, when applied to the target trial. If the prognostic model estimates the conditional means of $Y$ accurately, we would expect the bias to be small. On the other hand, the stochasticity of disease progression places a floor on the residual variance that can be achieved. If a patient's baseline characteristics are perfectly predictive of their outcomes under placebo, one may construct a prognostic model with $\sigma^2 = 0$ and $\beta_2 = 1$. However, this is not possible in general.

Our structural model assumes a constant treatment effect, $\beta_1 = E(Y_i|M_i, W_i = 1) - E(Y_i|M_i, W_i = 0)$, in the sense that this quantity does not depend on the subject's baseline characteristics through their value of $M$. This effect is the target of inference.

\subsection{Prognostic covariate adjustment}
\label{sec:freq_summary}

Our choice of structural model is inspired by Schuler et al.~\cite{freq-methods}, who establish that linear regression of subject outcomes on prognostic scores achieves the optimal noise reduction, provided that the prognostic model can recover the conditional mean of $Y$ given each subject's baseline characteristics and that the treatment effect is constant. The analysis method proposed in Schuler et al.~\cite{freq-methods}, which we will refer to simply as ``prognostic covariate adjustment'', is to fit all unknown parameters in \eqref{eq:lin_mod} by maximum likelihood. For a given significance level $\alpha$, the treatment is deemed effective if a t-test rejects the null hypothesis that $\beta_1 = 0$.

The following result follows immediately from Corollary A.4.1 in Schuler et al.~\cite{freq-methods}, after noting that their definitions of $\sigma_0^2(1-\rho_0^2)$ and $\sigma_1^2(1-\rho_1^2)$ both coincide with $\sigma^2$ under the assumptions of this paper.
\begin{lemma}
\label{lemma:prog_power}
After fixing true values for $\beta_0, \beta_1, \beta_2$ and $\sigma^2$, and letting $n \to \infty$, the power of prognostic covariate adjustment approaches:
\begin{equation}
\label{eq:prog_power}
\Phi \left ( \Phi^{-1}(\alpha/2) + \frac{\beta_1 \sqrt{np(1-p)}}{\sigma} \right ) + \Phi \left ( \Phi^{-1}(\alpha/2) - \frac{\beta_1 \sqrt{np(1-p)}}{\sigma} \right )
\end{equation}
\end{lemma}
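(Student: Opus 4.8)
The plan is to invoke Corollary A.4.1 of Schuler et al.~\cite{freq-methods} essentially as a black box and to supply only the two ingredients the present parameterization requires: the identification of the relevant residual variances, and the translation of an asymptotic normality statement into a two-sided power formula. That corollary establishes that the maximum-likelihood (equivalently, ordinary least squares) estimator $\hat{\beta}_1$ of the treatment coefficient in \eqref{eq:lin_mod} is asymptotically normal, centered at $\beta_1$, with an asymptotic variance built from the arm-specific residual variances $\sigma_0^2(1-\rho_0^2)$ and $\sigma_1^2(1-\rho_1^2)$ -- the variances of $Y$ that remain after its best linear predictor in the prognostic score $M$ is removed, within the placebo and active arms respectively.

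The first step is to show that both of these quantities collapse to the single parameter $\sigma^2$ under the assumptions of Section~\ref{sec:target}. In the placebo arm the conditional law of $Y_i$ given $M_i$ is exactly $\beta_0 + \beta_2(M_i - \bar{M}) + \bar{M} + N(0,\sigma^2)$, so the residual of $Y$ about its conditional mean in $M$ has variance $\sigma^2$; because that conditional mean is linear in $M$, this residual variance coincides with $\sigma_0^2(1-\rho_0^2)$, which is precisely \eqref{eq:sigma^2}. Since the model imposes a constant treatment effect $\beta_1$ and a common noise variance $\sigma^2$, the active arm differs only by the additive shift $\beta_1$, leaving the residual variance unchanged, so $\sigma_1^2(1-\rho_1^2) = \sigma^2$ as well. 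I would flag here the only mild subtlety, namely that the term $(1-\beta_2)\bar{M}$ is common to all subjects in a given realization and is absorbed into the fitted intercept, so it affects neither the residual variance nor the estimation of $\beta_1$.

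Substituting $\sigma_0^2(1-\rho_0^2) = \sigma_1^2(1-\rho_1^2) = \sigma^2$ into the corollary's asymptotic variance, and using that the prognostic score is independent of treatment assignment by randomization, yields $\mathrm{Var}(\hat{\beta}_1) \to \sigma^2 / \big(n p (1-p)\big)$. Hence the standardized statistic $\hat{\beta}_1 \sqrt{n p (1-p)}/\sigma$ is asymptotically standard normal under the null and has asymptotic mean $\mu = \beta_1 \sqrt{n p (1-p)}/\sigma$ under the truth. Since the plug-in variance estimator is consistent and the degrees of freedom diverge, the t-test is asymptotically equivalent to the corresponding two-sided z-test.

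The final step is the routine power computation. Writing the rejection event as $\{ |T| > z_{1-\alpha/2} \}$ with $z_{1-\alpha/2} = -\Phi^{-1}(\alpha/2)$ and $T$ asymptotically $N(\mu,1)$, the rejection probability splits as $\Phi(\mu + \Phi^{-1}(\alpha/2)) + \Phi(\Phi^{-1}(\alpha/2) - \mu)$, which is exactly \eqref{eq:prog_power}. I expect no genuine obstacle: the statement is billed as following ``immediately'' from the cited corollary, so the only real content is the variance identification in the first step -- verifying that the paper's reparameterization forces both of Schuler et al.'s arm-specific residual variances to equal $\sigma^2$ -- while the passage to the power formula is a standard manipulation of the normal CDF.
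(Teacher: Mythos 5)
Your proposal matches the paper's argument exactly: the paper's entire ``proof'' is the one-line observation that the result follows from Corollary A.4.1 of Schuler et al.\ once one notes that $\sigma_0^2(1-\rho_0^2)$ and $\sigma_1^2(1-\rho_1^2)$ both coincide with $\sigma^2$ under the assumptions of Section~\ref{sec:target}. You simply spell out the variance identification and the standard normal-CDF power computation in more detail, which is correct and faithful to the intended route.
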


\subsection{Single-arm analyses}
\label{sec:single-arm}

Compared with prognostic covariate adjustment, a more aggressive approach is to assume that the prognostic model recovers each subject's expected outcome under the placebo precisely. In view of \eqref{eq:lin_mod} evaluated at $W_i = 0$, this is the assumption that $\beta_0=0$ and $\beta_2=1$. This in turn implies that:
\begin{equation}
        \label{eq:lin_mod_single}
        Y_i | M_i, \{W_i = 1\} \sim \beta_1 + M_i + N(0,\sigma^2)
    \end{equation}
    
We define the ``single-arm'' analysis as a linear regression of the differences $(Y-M)$ for the active treatment arm only on an intercept term, $\beta_1$, which is fitted by maximum likelihood. The treatment is deemed effective if a t-test rejects the null hypothesis that $\beta_1 = 0$ at a given significance level, $\alpha$.

The following result is proven in Appendix \ref{sec:proofs}.
\begin{lemma}
After fixing true values for $\beta_0, \beta_1, \beta_2$ and $\sigma^2$, and letting $n \to \infty$, the power of the single-arm analysis approaches:
\label{lemma:single_power}
\begin{equation}
\label{eq:single_power}
    \Phi \left ( \Phi^{-1}(\alpha/2) + \frac{(\beta_1 + \beta_0) \sqrt{pn}}{\sigma} \right ) + \Phi \left ( \Phi^{-1}(\alpha/2) - \frac{(\beta_1 + \beta_0) \sqrt{pn}}{\sigma} \right )
\end{equation}
\end{lemma}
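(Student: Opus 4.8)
The plan is to reduce the single-arm procedure to a one-sample $t$-test on the treated-arm differences $D_i = Y_i - M_i$, determine the limiting distribution of its $t$-statistic under the true data-generating process \eqref{eq:lin_mod}, and then read off the two-sided power in exactly the form already obtained in Lemma~\ref{lemma:prog_power}. The estimator defined in \eqref{eq:lin_mod_single} is simply the treated-group mean $\hat{\beta}_1 = \bar{D} = \tfrac{1}{pn}\sum_{i:W_i=1} D_i$, and the test rejects when the studentized statistic $T = \sqrt{pn}\,\bar{D}/\hat{\sigma}_D$ exceeds the usual two-sided critical value, where $\hat{\sigma}_D^2$ is the residual sample variance of the $D_i$.

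First I would substitute $W_i = 1$ into the true model \eqref{eq:lin_mod} and cancel the shared $\bar{M}$ term to write each treated difference as $D_i = \beta_0 + \beta_1 + (\beta_2-1)(M_i - \bar{M}) + \varepsilon_i$ with $\varepsilon_i \sim N(0,\sigma^2)$. Under the defining single-arm premise $\beta_2 = 1$, the prognostic score enters with unit slope, the $(\beta_2-1)(M_i-\bar M)$ term drops out, and the treated differences are i.i.d.\ $N(\beta_0+\beta_1,\sigma^2)$. The key conceptual point is that $\bar{D}$ therefore estimates $\beta_0 + \beta_1$ rather than $\beta_1$: the model bias $\beta_0$ of \eqref{eq:beta_0} is absorbed into the estimand and masquerades as a treatment effect, which is precisely the mechanism behind the type~I error inflation discussed in the text.

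Next comes the asymptotic analysis of $T$. Since the treatment assignment is independent of the prognostic scores, the law of large numbers gives $\bar{D} \to \beta_0 + \beta_1$ and $\hat{\sigma}_D^2 \to \sigma^2$ in probability, while the central limit theorem gives $\sqrt{pn}\bigl(\bar{D} - (\beta_0+\beta_1)\bigr) \to N(0,\sigma^2)$ in distribution. Combining these by Slutsky's theorem, $T$ behaves asymptotically like a $N\!\bigl(\tfrac{(\beta_0+\beta_1)\sqrt{pn}}{\sigma},\,1\bigr)$ variate. The two-sided level-$\alpha$ test rejects when $|T| > -\Phi^{-1}(\alpha/2)$, so writing the power as $P\bigl(T < \Phi^{-1}(\alpha/2)\bigr) + P\bigl(T > -\Phi^{-1}(\alpha/2)\bigr)$, plugging in the shifted normal limit, and applying $1 - \Phi(-x) = \Phi(x)$ reproduces \eqref{eq:single_power} term by term.

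The main obstacle is the variance bookkeeping that pins the denominator to exactly $\sigma$. Establishing $\hat{\sigma}_D \to \sigma$ and the centered CLT is what forces the studentized statistic to be asymptotically standard normal rather than merely normal with some inflated or deflated variance; both limits rest on the slope assumption $\beta_2 = 1$, which makes the $D_i$ genuinely homoscedastic with residual variance $\sigma^2$. I would also take care with the fact that the normal shift $\tfrac{(\beta_0+\beta_1)\sqrt{pn}}{\sigma}$ grows with $n$, so the argument must justify that the $t$-quantiles converge to the normal quantiles and that the power of the studentized statistic converges to that of its Gaussian surrogate even as the noncentrality diverges; this is the step where I would be most careful, since a naive appeal to a fixed limiting distribution is not by itself enough to control the rejection probability.
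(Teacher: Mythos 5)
Your proposal follows essentially the same route as the paper's proof: reduce the single-arm procedure to a one-sample $t$-test on the treated-arm differences $Y_i - M_i$, show the $t$-statistic is asymptotically $N\bigl((\beta_0+\beta_1)\sqrt{pn}/\sigma,\,1\bigr)$, and read off the two-sided power after replacing $t_{pn-1}$ quantiles by normal quantiles; the paper states exactly this in three lines. One caution: you discard the $(\beta_2-1)(M_i-\bar M)$ term by appealing to ``the defining single-arm premise $\beta_2=1$,'' but that is an assumption of the \emph{analysis}, not of the data-generating process --- the lemma fixes an arbitrary true $\beta_2$, and when $\beta_2\neq 1$ this term inflates $\hat\sigma_D^2$ toward $\sigma^2+(\beta_2-1)^2\,\mathrm{Var}(M)$ and perturbs $\bar D$ at the $n^{-1/2}$ scale, so the displayed formula with a bare $\sigma$ in the denominator really only holds when the true slope is $1$ (or $\mathrm{Var}(M)=0$, which the setup excludes). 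The paper's own proof makes the same leap silently, so your argument is at least as careful as the original, and your explicit attention to the studentization and to the fact that the noncentrality drifts with $n$ (so that a fixed-limit argument alone does not control the rejection probability) addresses points the paper glosses over.
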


\subsection{Unadjusted analyses}
\label{sec:unadjusted}

In contrast to prognostic covariate adjustment, which fits eq. \eqref{eq:lin_mod}, an ``unadjusted analysis'' ignores all subjects' prognostic scores. This approach considers a linear regression of $Y$ on $W$ and an intercept term only. This naive regression is fitted by maximum likelihood, and the treatment is deemed effective if a t-test at level $\alpha$ rejects the null hypothesis that the coefficient on $W$ is zero.

\section{Our proposal}
\label{sec:methods}

Our approach is to fit the structural model given by eq. \ref{eq:lin_mod} with a Bayesian linear model. Our joint prior for $\beta_0, \beta_1, \beta_2, \sigma^2$ is taken from the conjugate Normal Inverse-Gamma family for computational convenience \cite{banerjee2008bayesian}. Our main objective with this prior is to encode a belief that $|\beta_0/\sigma| \leq \lambda$; i.e. that the ratio of the average bias of the prognostic model to its residual variance is less than a parameter $\lambda$. This parameter therefore describes our confidence in the predictive accuracy of our prognostic model. We seek to place non-informative priors on all other parameters. Naturally we want a non-informative prior for $\beta_1/\sigma$, because historical information about the active treatment of the target trial is not generally available. It is possible that historical trials could be used to learn a reliable, informative prior on $\beta_2/\sigma$. However, we do not pursue such an approach, as it offers little added benefit in terms of power (see Theorem \ref{thm:beta2} in the appendix).

Based on these considerations, we define our prior as follows. For each $\epsilon > 0$, let $\mathcal{P}_\epsilon$ denote the following joint distribution, where \eqref{eq:sigma_eps} and \eqref{eq:beta_eps} hold independently:
\begin{align}
    \label{eq:sigma_eps}
    \sigma^2 &\sim Inverse-Gamma(\epsilon, \epsilon)\\
    \label{eq:beta_eps}
    \frac{1}{\sigma} \begin{pmatrix}
    \beta_0\\
    \beta_1\\
    \beta_2
    \end{pmatrix} &\sim N \left ( 0, \begin{pmatrix} \lambda^2 & 0 & 0\\
    0 & 1/\epsilon & 0\\
    0 & 0 & 1/\epsilon
    \end{pmatrix} \right )
\end{align}
Then starting from the prior $\mathcal{P}_\epsilon$, let $\mathcal{P}'_\epsilon$ denote the joint posterior for all parameters, after all data has been observed for the target trial. Since it depends on the data, $\mathcal{P}'_\epsilon$ is a random variable, which takes values in the space of Normal Inverse-Gamma distributions over $\mathbb{R}^+ \times \mathbb{R}^3$. In view of the following lemma, whose proof is given in Appendix \ref{sec:proofs}, we propose to analyze the data using the limiting posterior distribution, $\mathcal{P}'$.

\begin{figure}[hbt!]
\begin{center}
\includegraphics[width=4.5in]{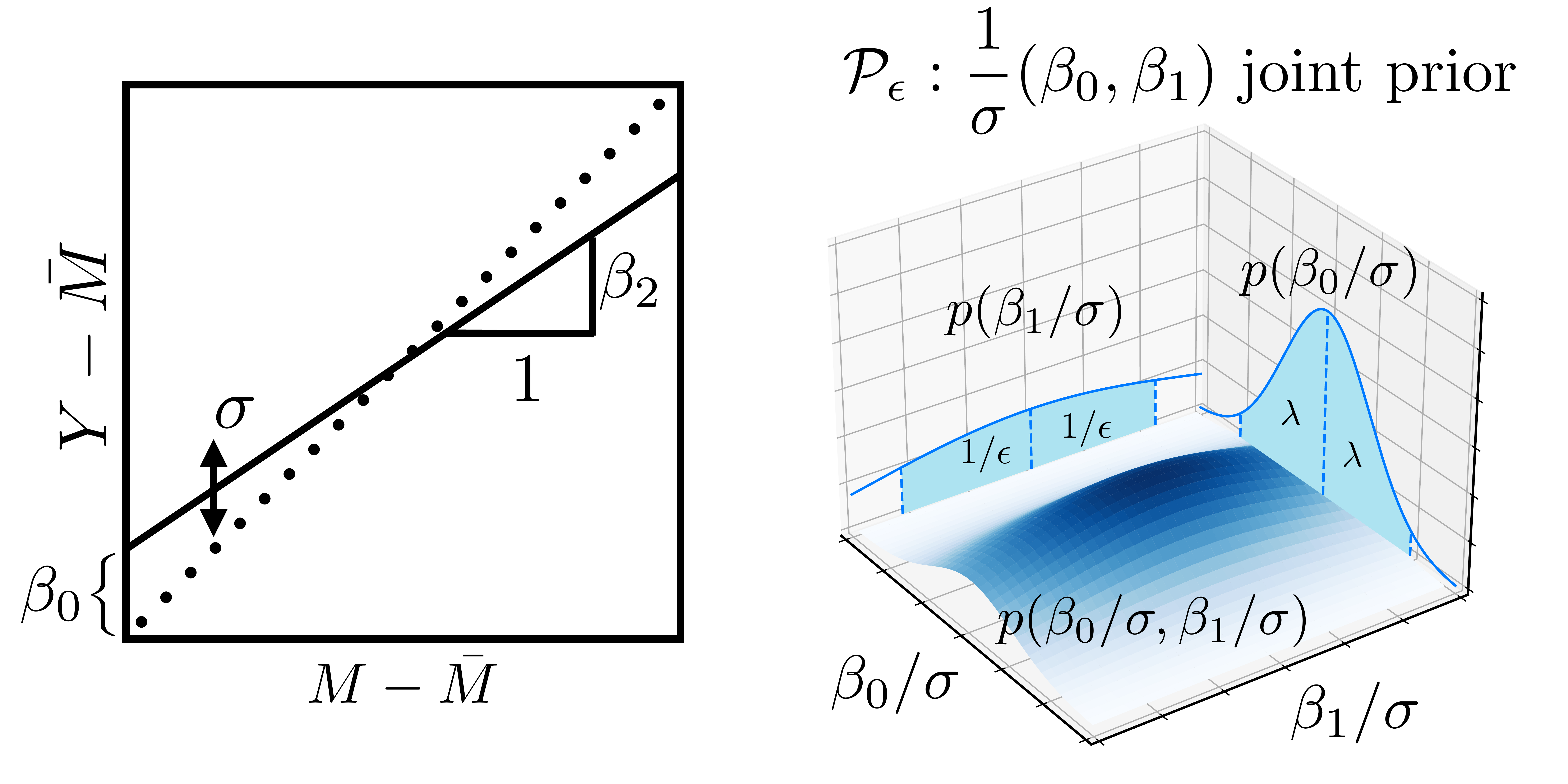}
\end{center}
\caption{\textbf{Left: \boldsymbol{$\beta_0$} captures bias in M}, the prognostic score, for a subject in the placebo arm (i.e. when $W=0$). $\beta_2$ captures how the bias varies with M, shown as the slope of the line. $\sigma^2$ represents the variance of outcome $Y$, conditional on $M$ and $W$. The dotted line shows the case where $\beta_2$=1. \textbf{Right: \boldsymbol{$\lambda$} controls the confidence in M}, shown in the joint prior of the treatment effect $\beta_1$ and bias $\beta_0$. $\beta_1 / \sigma$ has a flat prior indicating no knowledge of the treatment effect $\beta_1$ at the outset of the target trial, similar to a frequentist framework.
\label{fig:model_params}}
\end{figure}

\begin{lemma}
\label{lemma:post}
Let $\mathbf{Y}$ be the vector of centered outcomes, $Y_i - \bar{M}$, across all $n$ subjects, let $\mathbf{X}$ be the matrix of row vectors, $(1, W_i - p, M_i - \bar{M})^T$, and define the random variables:
\begin{align}
    \label{eq:V}
    V^{-1} &= \begin{pmatrix} 1/\lambda^2 & -p/\lambda^2 & 0\\
    -p/\lambda^2 & p^2/\lambda^2 & 0\\
    0 & 0 & 0
    \end{pmatrix} + \mathbf{X}^T \mathbf{X},\\
    \label{eq:mu}
    \mu &= V \mathbf{X}^T \mathbf{Y}, \\
    \label{eq:S^2}
    S^2 &= \mathbf{Y}^T \mathbf{Y} - \mu^T V^{-1} \mu.
\end{align}
Note that these are well-defined, because the given expression for $V^{-1}$ is invertible almost surely.

Then, almost surely as $\epsilon \to 0$, $\mathcal{P}'_\epsilon$ converges to a limiting Normal Inverse-Gamma distribution, $\mathcal{P}'$, which is defined as follows:
\begin{align}
    \label{eq:sigma_post}
    \sigma^2 &\sim Inverse-Gamma(n/2, S^2/2)\\
    \label{eq:beta_post}
    \begin{pmatrix}
    \beta_0 + p \beta_1\\
    \beta_1\\
    \beta_2
    \end{pmatrix} | \, \sigma^2 &\sim N(\mu, \sigma^2 V)
\end{align}
Under $\mathcal{P}'$,
\begin{equation}
    \label{eq:beta1_post}
    \frac{\beta_1 - \mu_1}{\sqrt{V_{11} S^2 / n}} \sim t_n
\end{equation}
\end{lemma}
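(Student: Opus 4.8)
The plan is to recognize that this lemma is, at bottom, the standard conjugate update for Bayesian linear regression, together with a continuity statement for the Normal-Inverse-Gamma family as $\epsilon \to 0$. The first step I would take is to sort out the reparameterization. Since $\mathbf{X}$ has rows $(1, W_i - p, M_i - \bar{M})$, writing the mean of $\mathbf{Y}$ as $\mathbf{X}\gamma$ identifies the natural regression coefficients as $\gamma = (\beta_0 + p\beta_1,\, \beta_1,\, \beta_2)^T$, which is exactly the vector on the left of \eqref{eq:beta_post}; it is related to the original parameters by an invertible unit-upper-triangular map $\gamma = T\beta$. Transporting the prior \eqref{eq:beta_eps} through $T$ gives $\gamma/\sigma \sim N\!\left(0,\, T\,\operatorname{diag}(\lambda^2, 1/\epsilon, 1/\epsilon)\,T^T\right)$, whose precision matrix I would compute to be
\[
\begin{pmatrix} 1/\lambda^2 & -p/\lambda^2 & 0 \\ -p/\lambda^2 & p^2/\lambda^2 + \epsilon & 0 \\ 0 & 0 & \epsilon \end{pmatrix},
\]
which converges entrywise, as $\epsilon \to 0$, to the prior-precision term of \eqref{eq:V}.

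With the prior in this form, the second step is a routine application of the Normal-Inverse-Gamma conjugate update. Because the prior mean of $\gamma/\sigma$ is zero, the posterior $\mathcal{P}'_\epsilon$ is Normal-Inverse-Gamma with precision $\Lambda_\epsilon = \Sigma_\epsilon^{-1} + \mathbf{X}^T\mathbf{X}$ (where $\Sigma_\epsilon^{-1}$ is the matrix above), conditional mean $\mu_\epsilon = \Lambda_\epsilon^{-1}\mathbf{X}^T\mathbf{Y}$, shape $a_\epsilon = \epsilon + n/2$, and scale $b_\epsilon = \epsilon + \tfrac{1}{2}\bigl(\mathbf{Y}^T\mathbf{Y} - \mu_\epsilon^T \Lambda_\epsilon \mu_\epsilon\bigr)$. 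I would then pass to the limit: since the limiting matrix $V^{-1}$ of $\Lambda_\epsilon$ is invertible almost surely, inversion is continuous there, so $\Lambda_\epsilon \to V^{-1}$, $\mu_\epsilon \to \mu$, $a_\epsilon \to n/2$, and $b_\epsilon \to S^2/2$, reproducing \eqref{eq:mu}, \eqref{eq:S^2}, \eqref{eq:sigma_post} and \eqref{eq:beta_post} exactly. Convergence of $\mathcal{P}'_\epsilon$ to $\mathcal{P}'$ then follows from continuity of the map sending Normal-Inverse-Gamma parameters to the corresponding law (pointwise convergence of densities, hence convergence in distribution); I would note in passing that $S^2 > 0$ almost surely, so the limiting Inverse-Gamma is proper.

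For the final display \eqref{eq:beta1_post}, I would argue directly from the structure of $\mathcal{P}'$. Conditional on $\sigma^2$, the quantity $Z := (\beta_1 - \mu_1)/(\sigma\sqrt{V_{11}})$ is standard normal by \eqref{eq:beta_post}, and since this conditional law does not depend on $\sigma^2$, $Z$ is standard normal and independent of $\sigma^2$. From \eqref{eq:sigma_post}, $1/\sigma^2 \sim \operatorname{Gamma}(n/2, S^2/2)$ in the rate parameterization, so $U := S^2/\sigma^2 \sim \chi^2_n$, a function of $\sigma^2$ and hence independent of $Z$. Writing
\[
\frac{\beta_1 - \mu_1}{\sqrt{V_{11} S^2/n}} = \frac{Z}{\sqrt{U/n}}
\]
exhibits the left-hand side as a standard normal divided by the square root of an independent $\chi^2_n$ over its degrees of freedom, which is $t_n$ by definition.

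The steps are individually standard, so the main thing to get right is the interface between them: carefully carrying the reparameterization through the prior so that the $\epsilon \to 0$ limits of $\Lambda_\epsilon$, $\mu_\epsilon$ and $b_\epsilon$ land precisely on \eqref{eq:V}--\eqref{eq:S^2}, and being explicit that the almost-sure invertibility of $V^{-1}$ asserted in the lemma is what licenses passing the inverse through the limit and guarantees a proper limiting law. I expect the bookkeeping in the scale parameter --- verifying that $\mathbf{Y}^T\mathbf{Y} - \mu_\epsilon^T\Lambda_\epsilon\mu_\epsilon \to S^2$ --- to be the most error-prone piece, since it is where the zero-prior-mean simplification and the identity $\mu = V\mathbf{X}^T\mathbf{Y}$ must be combined.
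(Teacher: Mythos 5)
Your proposal is correct and follows essentially the same route as the paper's proof: reparameterize via the unit upper-triangular map to $(\beta_0+p\beta_1,\beta_1,\beta_2)$, transport the prior to obtain the precision matrix $\operatorname{diag}$-plus-$\epsilon$ form, apply the standard conjugate Normal--Inverse-Gamma update, and pass to the limit $\epsilon\to 0$ using the almost-sure invertibility of $V^{-1}$. The only difference is that you spell out the normal-over-$\chi^2_n$ derivation of the $t_n$ marginal, which the paper simply cites as a standard property of the Normal--Inverse-Gamma family.
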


We interpret the posterior mean, $\mu_1$, as a frequentist estimator of $\beta_1$. When $n \lambda^2$ is large, the prior is flat and our estimator is identical to prognostic covariate adjustment. However, at smaller $\lambda$, we obtain an estimator with lower variance by shrinking our estimate of the expected placebo outcome ($\beta_0 + \bar{M}$) towards $\bar{M}$.

While there are many ways that the posterior for $\beta_1$ could be used to assess the efficacy of the treatment, we focus on a binary decision rule. For a given signficance level, $\alpha$, we deem the treatment effective if the posterior probability on $\{\beta_1 > 0\}$ either exceeds $(1-\alpha/2)$ or falls below $\alpha/2$. This decision rule mirrors 2-tailed frequentist hypothesis testing, so we will refer to this decision as a ``rejection of the null hypothesis''. For a more detailed discussion on ``probability of sign'' rules and their connection to frequentist testing, see Makowski et al. \cite{makowski2019indices}.

From Lemma \ref{lemma:post}, we see that a rejection is obtained when
\begin{equation}
    \label{eq:reject}
    \frac{\mu_1}{\sqrt{V_{11} S^2 / n}} > (F^t_n)^{-1}(1 - \alpha/2)
\end{equation}
where $F^t_n$ denotes the distribution function of a $t_n$ random variable. We will refer to \eqref{eq:reject} as the ``rejection event''. The frequency with which this event occurs is the ``rejection rate'' of the null hypothesis, which corresponds to either power or type I error rate depending on the veracity of the null.

\subsection{Choosing the prior}
\label{sec:hist}

For the key results of this paper, we are agnostic about how the historical trials are used to construct the upper bound $\lambda$ for $|\beta_0|/\sigma$; we will simply establish that our method offers approximate type I error control when $\lambda$ is chosen appropriately.

We note here that choosing $\lambda$ is reasonably trivial if all of the following conditions hold:
\begin{enumerate}
    \item The number of historical trials, $m$, is large.
    \item The sample sizes of each historical trial are large.
    \item The control arms of the historical trials are relevant to that of the target (see e.g. \cite{pocock1976combination} for a formal discussion of ``relevance'').
\end{enumerate}
In that case, for $j = 1, \dots, m$, the bias $\beta_{0,j}$ and residual variance $\sigma_j^2$ of the prognostic model $\mathcal{M}$ applied to historical trial $j$ can be defined by analogy to equations \eqref{eq:beta_0} and \eqref{eq:sigma^2}.\footnote{Here we assume that the residual variance is constant across the subjects within individual historical trials, so that $\sigma_j^2$ is well-defined.} Each ratio, $\beta_{0,j}/\sigma_j$, may be estimated with little error, and we might take $\lambda$ as e.g. the upper 95th percentile of the values: $|\beta_{0,1}|/\sigma_1, \dots, |\beta_{0,m}|/\sigma_m$. Section \ref{sec:lambda} addresses practical considerations for choosing $\lambda$ when historical data is less plentiful.

\section{Asymptotic behavior}
\label{sec:theory}

In this section, we analyze the asymptotic rejection rate of our Bayesian analysis in the limit where $n \to \infty$\footnote{
We focus on the given asymptotic with $n\lambda^2 = O(1)$, because it ensures that the ratio of information between the target trial and the prior remains constant. Under any of the distributions, $\mathcal{P}_\epsilon$, $1/\lambda^2$ is chosen to be the prior precision on $\beta_0/\sigma$. Heuristically we may view $1/\lambda^2$ as an effective sample size (see e.g. \cite{morita2008determining} for details), and so $n\lambda^2$ represents a ratio of sample sizes between the target trial and our chosen prior.
}
and $\lambda \to 0$, such that $n\lambda^2 = O(1)$. We find that the power of our method varies between the power of prognostic covariate adjustment and the greater power of the single-arm estimator, depending on the limiting value of $n \lambda^2$. But in contrast to the single-arm estimator, if $\lambda$ is chosen successfully as an upper bound on $|\beta_0/\sigma|$, then the type I error is controlled approximately.

We begin with the main theoretical contribution of this paper: an asymptotic expression for the rejection rate in terms of the true values of all unknown parameters. The proof of this theorem is given in Appendix \ref{sec:proofs}. Evaluating this expression at $\beta_1=0$ gives an asymptotic type I error rate for our Bayesian method, while $\beta_1 \neq 0$ describes its power.
\begin{theorem}
\label{thm:rejection_rate}
Fix the true values of $\beta_0, \beta_1, \beta_2$ and $\sigma^2$, and consider the limit where $\lambda^2 \to 0$ and $n \to \infty$ such that $n\lambda^2 = O(1)$.
\begin{enumerate}
    \item
\begin{equation}
    \label{eq:V_11}
    n V_{11} \to_P \frac{n \lambda^2 + 1}{n \lambda^2p(1-p) + p}.
\end{equation}
\item To leading order, the rejection rate is given by:
\begin{equation}
\label{eq:rejection_rate}
\begin{split}
    \Phi \left ( \Phi^{-1}(\alpha/2) \sqrt{\frac{V_{11}}{\hat{V}} \left (1 + \frac{(1-p)\beta_0^2}{\sigma^2 (n\lambda^2(1-p) + 1)} \right )} + \frac{\tau}{\sigma \sqrt{\hat{V}}} \right )\\
    + \Phi \left ( \Phi^{-1}(\alpha/2) \sqrt{\frac{V_{11}}{\hat{V}} \left (1 + \frac{(1-p)\beta_0^2}{\sigma^2 (n\lambda^2(1-p) + 1)} \right )} - \frac{\tau}{\sigma \sqrt{\hat{V}}} \right )
\end{split}
\end{equation}
where
\begin{align}
    \label{eq:tau}
    \tau &= \beta_1 + \left ( \frac{1}{n\lambda^2(1-p) + 1} \right ) \beta_0, \\
    \label{eq:V_hat}
    \hat{V} &= \frac{p + (1-p)(n\lambda^2 + 1)^2}{np(n \lambda^2(1-p) + 1)^2}.
\end{align}
and $\Phi$ is the CDF of the standard normal distribution.
\end{enumerate}
\end{theorem}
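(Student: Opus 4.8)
The plan is to treat the test statistic $T = \mu_1/\sqrt{V_{11}S^2/n}$ from the rejection event \eqref{eq:reject} as a deterministic function of the data and to determine its sampling distribution under the true parameter values, exploiting the fact that, conditional on the design, the model \eqref{eq:lin_mod} is an exact Gaussian linear regression. The first move is to rewrite the centered outcomes as $\mathbf{Y} = \mathbf{X}\theta + \varepsilon$ with $\theta = (\beta_0 + p\beta_1,\, \beta_1,\, \beta_2)^T$ and $\varepsilon \sim N(0,\sigma^2 I)$; this identifies $\theta$ as the parameter vector on which $\mathcal{P}'$ places its posterior and explains the shift by $p\beta_1$ in \eqref{eq:beta_post}. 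Writing the prior precision block as $P_\lambda = \lambda^{-2} q q^T$ with $q = (1,-p,0)^T$, so that $V^{-1} = P_\lambda + \mathbf{X}^T\mathbf{X}$ and $q^T\theta = \beta_0$, one obtains the clean identities $\mu = (I - VP_\lambda)\theta + V\mathbf{X}^T\varepsilon$ and, conditional on the design, $\mu \sim N\!\big(\theta - VP_\lambda\theta,\ \sigma^2(V - VP_\lambda V)\big)$.

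Part 1 and the conditional moments of $\mu_1$ then reduce to the asymptotics of a few entries of $V$. First I would compute $\mathbf{X}^T\mathbf{X}$ explicitly: randomization forces $\sum_i (W_i - p) = 0$ and $\sum_i (W_i-p)^2 = np(1-p)$ exactly, while $\sum_i (M_i - \bar M) = 0$ by construction, so the only awkward entry is the $W$--$M$ cross term $c = \sum_i (W_i-p)(M_i - \bar M)$. Because $W$ is independent of $M$ and drawn from a permutation, $c = O_P(\sqrt n)$, an order of magnitude below the $O(n)$ diagonal entries. Substituting $\lambda^{-2} = n/\kappa$ with $\kappa = n\lambda^2 = O(1)$ and applying the cofactor formula, the relevant minor and the determinant share a common factor of $d = \sum_i (M_i-\bar M)^2 = O_P(n)$ that cancels, and every term involving $c$ or $c^2$ is lower order; collecting the surviving terms gives \eqref{eq:V_11}. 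The same cofactor bookkeeping yields explicit probability limits for $V_{00}$ and $V_{01}$, and since $q^T\theta = \beta_0$ one finds $(VP_\lambda\theta)_1 \to_P -\beta_0/(n\lambda^2(1-p)+1)$ and $(V - VP_\lambda V)_{11} \to_P \hat V$, so the conditional mean of $\mu_1$ converges to $\tau$ in \eqref{eq:tau} and its conditional variance to $\sigma^2\hat V$ in \eqref{eq:V_hat}.

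The subtle step, and the one I expect to be the main obstacle, is the denominator $S^2 = \mathbf{Y}^T(I - \mathbf{X}V\mathbf{X}^T)\mathbf{Y}$. One might hope $S^2/n \to_P \sigma^2$, but this is false: because the prior shrinks the estimate of $\beta_0$, the ``residual'' retains a bias contribution of the same order $n$ as the noise. Decomposing $S^2 = \theta^T(P_\lambda - P_\lambda V P_\lambda)\theta + 2\theta^T P_\lambda V \mathbf{X}^T\varepsilon + \varepsilon^T(I - \mathbf{X}V\mathbf{X}^T)\varepsilon$ and tracking orders carefully, the cross term is $O_P(\sqrt n)$ and the quadratic noise term concentrates at $n\sigma^2$, while the deterministic first term equals $n(1-p)\beta_0^2/(n\lambda^2(1-p)+1)$ to leading order (using $q^T V q \to_P \lambda^2/(n\lambda^2(1-p)+1)$). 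Hence $S^2/n \to_P \sigma^2\big(1 + (1-p)\beta_0^2/[\sigma^2(n\lambda^2(1-p)+1)]\big)$, which is exactly the factor appearing under the square roots in \eqref{eq:rejection_rate}. Recognizing that this term is $\Theta(n)$ rather than $o(n)$ — so that it does not vanish from $S^2/n$ — is the crux of the argument.

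Finally I would assemble the pieces. Since $\mu_1$ is conditionally Gaussian and $V_{11}$, $S^2/n$, and the conditional mean and variance of $\mu_1$ all converge in probability to deterministic limits, a Slutsky-type argument shows that $T$ is asymptotically Gaussian with mean $\tau/(\sqrt{V_{11}}\,s)$ and variance $\sigma^2\hat V/(V_{11}s^2)$, where $s^2$ denotes the limit of $S^2/n$. The critical value satisfies $(F^t_n)^{-1}(1-\alpha/2) \to -\Phi^{-1}(\alpha/2)$. Computing the two-sided rejection probability $P(T > t^*) + P(T < -t^*)$ by standardizing this Gaussian, substituting $s^2$, and simplifying $t^*\sqrt{V_{11}}\,s/(\sigma\sqrt{\hat V}) = -\Phi^{-1}(\alpha/2)\sqrt{(V_{11}/\hat V)\big(1 + (1-p)\beta_0^2/[\sigma^2(n\lambda^2(1-p)+1)]\big)}$ then reproduces the two $\Phi$ terms of \eqref{eq:rejection_rate} to leading order.
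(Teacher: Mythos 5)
Your proposal is correct and reaches all the same intermediate limits as the paper's proof (the limit of $nV_{11}$, the asymptotic law $\mu_1 \approx N(\tau, \sigma^2\hat V)$, and the inflated limit $S^2/n \to_P \sigma^2 + (1-p)\beta_0^2/(n\lambda^2(1-p)+1)$, followed by a Slutsky assembly), but it executes the two key computations by a genuinely different decomposition. The paper works entry-by-entry: it inverts $V^{-1}$ by cofactors, writes $\mu_1$ explicitly as a shrinkage combination of the arm-specific residuals $\bar{Y}_t - \bar{M} - \hat\beta_2(\bar{M}_t-\bar{M})$ and $\bar{Y}_c - \bar{M} - \hat\beta_2(\bar{M}_c-\bar{M})$, and handles $S^2$ by splitting off the maximum-likelihood residual sum of squares and evaluating $n[(\mathbf{X}^T\mathbf{X})^{-1} - V]$ directly. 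You instead exploit the rank-one structure of the prior precision, $P_\lambda = \lambda^{-2}qq^T$ with $q=(1,-p,0)^T$ and $q^T\theta = \beta_0$, to get the exact identities $\mu = (I-VP_\lambda)\theta + V\mathbf{X}^T\varepsilon$ and $S^2 = \theta^T(P_\lambda - P_\lambda VP_\lambda)\theta + 2\theta^TP_\lambda V\mathbf{X}^T\varepsilon + \varepsilon^T(I-\mathbf{X}V\mathbf{X}^T)\varepsilon$, so that everything reduces to the scalars $q^TVq$ and $(Vq)_1$ plus a trace computation. Your route is algebraically cleaner and makes it transparent \emph{why} the only bias and variance corrections enter through the single direction $q$ (i.e., through $\beta_0$), and why the $\Theta(n)$ bias term in $S^2$ cannot be dropped; the paper's route is heavier on algebra but yields the interpretable arm-mean representation of $\mu_1$ that the authors reuse in the discussion of the single-arm limit and in Theorem \ref{thm:beta2}. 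You have correctly identified the one step where a naive argument fails (assuming $S^2/n \to_P \sigma^2$), and your order-counting for the cross term ($O_P(\sqrt n)$) and the quadratic noise term (concentration at $n\sigma^2$ since $\operatorname{tr}(I - \mathbf{X}V\mathbf{X}^T) = n - 3 + \operatorname{tr}(VP_\lambda)$) is sound, so I see no gap.
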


In the proof of Theorem \ref{thm:rejection_rate}, we show that the asymptotic sampling distribution of $\mu_1$ approaches $N(\tau, \sigma^2 \hat{V})$. Noting that the prognostic covariate adjustment estimator has asymptotic variance, $\sigma^2/np(1-p)$, we see that the sampling variance has been scaled by a factor of
\begin{equation}
\label{eq:factor}
\frac{p(1-p) + (1-p)^2(n\lambda^2 + 1)^2}{(n \lambda^2(1-p) + 1)^2},
\end{equation}
which is always less than one. As $n \lambda^2 \to 0$, $\sigma^2 \hat{V}$ approaches the sampling variance of the single-arm estimator: $\sigma^2/np$.

This is shown in Figure~\ref{fig:type1_2_errs_example} in terms of the power and type I error rate for an example scenario.  As $\lambda$ varies, the operating characteristics of the Bayesian decision rule interpolate between the prognostic covariate adjustment and single-arm estimators, allowing practitioners to achieve a desired tradeoff between power and type I error depending on the application.  The figure shows this tradeoff in operating characteristics for one particular value of $n$, as well as the power dependence on $n$.  These power and type I error rate values come from theoretical calculations at points in parameter space that have $\beta_0 > 0$ and differ only in the value of $\beta_1$.  Additionally, this example suggests that the Bayesian estimator can achieve a highly desirable middle ground between the two estimators, with points that have significantly higher power than the prognostic covariate adjustment estimator at a minimal increase of type I error.

\begin{figure}[hbt!]
\begin{center}
\includegraphics[width=6.5in]{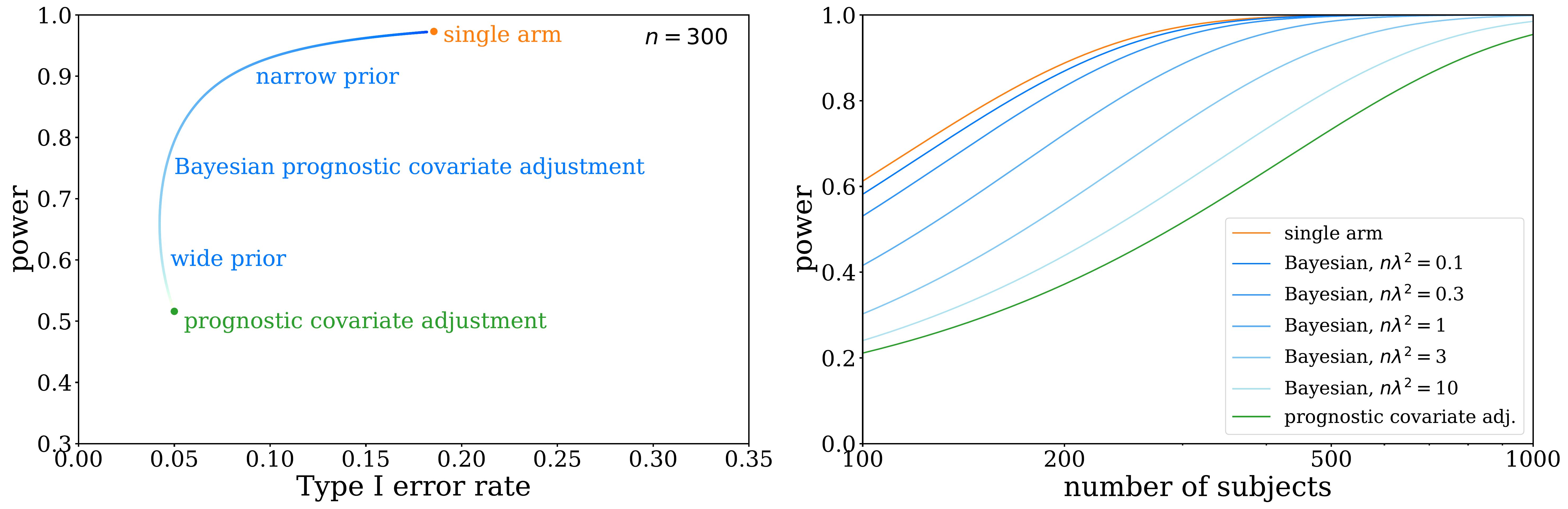}
\end{center}
\caption{{\bf The operating characteristics of the Bayesian decision rule interpolate between two frequentist estimators.} On the left, the theoretical type I error rate and power for Bayesian prognostic covariate adjustment are shown as the prior variance changes, compared to the same metrics for the prognostic covariate adjustment and single-arm estimators. The type I error rate and power are determined for two points in parameter space that differ only by the value of $\beta_1$, where both points have $\beta_0 > 0$.  The curve shows the Bayesian estimator's operating characteristics as a function of $\lambda$ ranging from $n \lambda^2 \ll 1$ to $n \lambda^2 \gg 1$.  On the right, the power for the $\beta_1 > 0$ point in parameter space is shown as $n$ varies for the Bayesian estimator for different values of the prior variance, as well as for the prognostic covariate adjustment and single-arm estimators.
\label{fig:type1_2_errs_example}}
\end{figure}

\subsection{Power}

To understand the power of our method, it is useful to examine \eqref{eq:rejection_rate} at extreme limiting values of $n \lambda^2$. When $n \lambda^2 \to \infty$, this expression approaches the asymptotic power of prognostic covariate adjustment previously given in \ref{eq:prog_power}.

For other values of $n\lambda^2$, the power of our Bayesian estimator exceeds that of prognostic covariate adjustment. Maximum power is attained as $n \lambda^2 \to 0$, where \eqref{eq:rejection_rate} approaches:
\begin{equation}
\label{eq:rejection_rate_zero}
    \Phi \left ( \Phi^{-1}(\alpha/2) \sqrt{1 + \frac{(1-p)\beta_0^2}{\sigma^2}} + \frac{(\beta_1 + \beta_0) \sqrt{np}}{\sigma} \right ) + \Phi \left ( \Phi^{-1}(\alpha/2) \sqrt{1 + \frac{(1-p)\beta_0^2}{\sigma^2}} - \frac{(\beta_1 + \beta_0) \sqrt{np}}{\sigma} \right )
\end{equation}
This coincides roughly with the power of the single-arm estimator as previously given in \eqref{eq:single_power}. 

The slight discrepancy is due to the different ways that the Bayesian and single-arm approaches estimate $\sigma^2$. Because the single-arm analysis uses only data from the active treatment arm, it obtains a consistent estimate of $\sigma^2$. On the other hand, at small $\lambda$, the Bayesian approach attributes to this residual variance any bias in the prognostic model that is observed on the target placebo arm, leading to a posterior mean for $\sigma^2$ that is inflated by the true value of $[1 + (1-p)\beta_0^2/\sigma^2]$ (this fact is apparent in the proof of Theorem \ref{thm:rejection_rate}).

\subsection{Type I Error}

The type I error rate of our method is given by evaluating \eqref{eq:rejection_rate} at $\beta_1=0$. Plugging that in shows that the asymptotic type I error rate only depends on the unknown parameters through the absolute value of $\beta_0/\sigma$. The condition $|\beta_0/\sigma| \leq \lambda$ ensures the asymptotic type I error rate is approximately at most $\alpha$. Although our method achieves this theoretical rate only at limiting values of $n$ and $\lambda$, the numerical analysis presented in Appendix~\ref{sec:visual} shows that the discrepancy at finite $n$ is small, so this condition is broadly sufficient.

\section{Empirical performance}
\label{sec:demonstration}

In this section we validate the Bayesian prognostic covariate adjustment theory through simulations, comparing theoretical predictions to simulations in terms of power and type I error.  We then demonstrate the empirical performance of the estimator by analyzing a past Alzheimer's disease clinical trial, comparing results to those obtained with other estimators. Further visualizations of the asymptotic theory are available in Appendix~\ref{sec:visual}.

\subsection{Simulations}
\label{sec:simulations}

To validate the operating characteristics defined by the framework presented here, we conduct a set of simulation studies.  Simulations are performed using a generative process that falls within the assumptions of the theory:
\begin{align}
\label{eq:sims_dgp}
    M &\sim {\cal N} (0, 1) \,, \\
    Y_C \,|\, M &\sim \beta_0 + \beta_2 M + \sigma {\cal N} (0, 1) \,, \nn \\
    Y_T \,|\, M &\sim \beta_0 + \beta_1 + \beta_2 M + \sigma {\cal N} (0, 1) \,. \nn
\end{align}
A nonlinear relationship between $M$ and $Y$ is also explored in the appendix.  Simulations are performed by repeatedly sampling from the generative processes with particular parameters, evaluating the posterior distribution each time, and computing the rate at which the null hypothesis is rejected (see eq. \ref{eq:reject}). This rejection rate defines the type I error rate when $\beta_1 = 0$, or the power when $\beta_1 > 0$.

There are several parameters that must be defined for any simulation: $\beta_0, \beta_1, \beta_2$, $\sigma$, the number of subjects $n$, and the standard deviation $\lambda$ of the prior on $\beta_0/\sigma$.  All simulations in this manuscript occur around variations of the following point:
\begin{align}
    \beta_0 = 0,\quad \beta_1 = 0,\quad \beta_2 = 1,\quad \sigma = \sqrt{3},\quad n = 1000,\quad n\lambda^2 = 1\,.
\end{align}
For type I error and power, different variations around these parameters are of interest.  Note that $\sigma$ is directly related to the correlation between the prognostic scores and the true outcomes; under \eqref{eq:sims_dgp}, the choice of $\sigma = \sqrt{3}$ implies a correlation of 1/2.\footnote{For the nonlinear case in the appendix, this choice of $\sigma$ produces a correlation of $1/\sqrt{2}$.}

\subsubsection{Type I Error}
\label{sec:simtypeIerr}

Validating the regime of type I error control -- i.e. where the error rate is less than the significance level, $\alpha$ -- is crucial.  As indicated above, this regime should correspond to $\beta_0 / \sigma < \lambda$.

\begin{figure}[htp!]
\begin{center}
\includegraphics[width=5.5in]{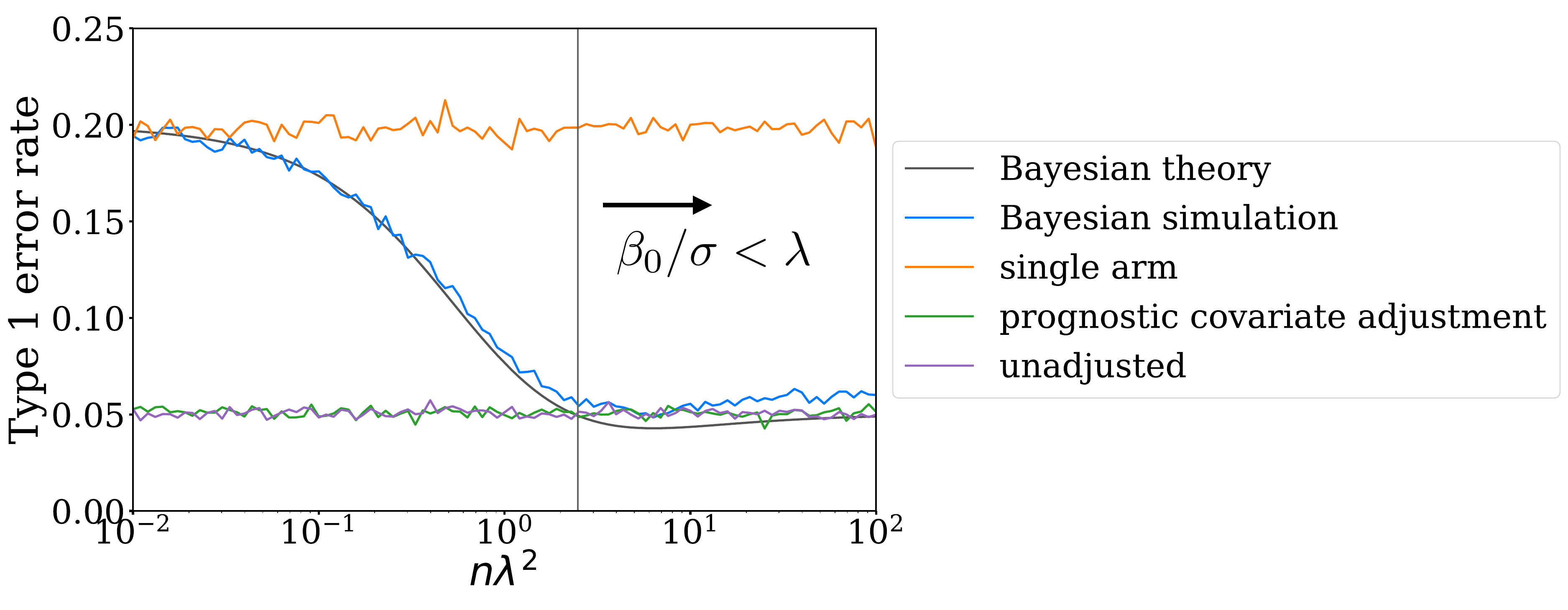}
\end{center}
\caption{{\bf The type I error rate of Bayesian prognostic covariate adjustment moves between two simple limits as the prior becomes narrow or flat.} Type I error rate as a function of $n\lambda^2$ for fixed $n = 1000$.  The value of $\beta_0$ is chosen so that the single arm estimator has a type I error rate of 20\%; the vertical line denotes the point where $\lambda = \beta_0 / \sigma$, above which the theory predicts type I error is at most $\alpha$.  Type I error from the Bayesian estimator is shown along with power from the single-arm, prognostic covariate adjustment, and unadjusted estimators.
\label{fig:gridA}}
\end{figure}

In \Fig{gridA}, the type I error rate computed from simulation is compared to the theoretical predictions, with only small deviations observed.  For fixed $\beta_0$, the type I error rate increases as $n \lambda^2$ decreases, and for $n \lambda^2 \not\ll 1$ it is significantly less than that of the single arm estimator.

Additionally, we performed simulations for various $n$ and $n\lambda^2$ and found generally good agreement between theoretical predictions and simulations, with the largest corrections coming at small $n$ or large $\lambda$ (see Appendix~\ref{sec:visual}).  This confirms that the type I error rate is bounded by the significance $\alpha$ at large $n$ and small $\lambda$.  Furthermore, as long as $n \lambda^2$ isn't too small, the type I error dependence on $\beta_0 / \sigma$ is much smaller than the single-arm estimator.  Note that the single-arm estimator is independent of the prior, which is a component of Bayesian prognostic covariate adjustment.

\subsubsection{Power}
\label{sec:simpower}

The power of the Bayesian estimator is predicted to depend on $\lambda$, ranging between the prognostic covariate adjustment limit when $\lambda$ is large ($n \lambda^2 \gg 1$) to a single-arm estimator when $\lambda$ is small ($n \lambda^2 \ll 1$). In \Fig{gridE}, we fix $\beta_1$ and compute the empirical power through simulations.  The power is shown as the precision of the prior varies between $n \lambda^2 \ll 1$ and $n \lambda^2 \gg 1$.   The Bayesian results are compared to three frequentist estimators: the single-arm estimator and estimators with and without prognostic covariate adjustment.  The agreement between the theoretical power and the simulations is good in all comparisons.  The power interoplates smoothly between the single-arm estimator when $n \lambda^2 \ll 1$ and the prognostic covariate adjustment estimator when $n \lambda^2 \gg 1$.

\begin{figure}[htp!]
\begin{center}
\includegraphics[width=5.5in]{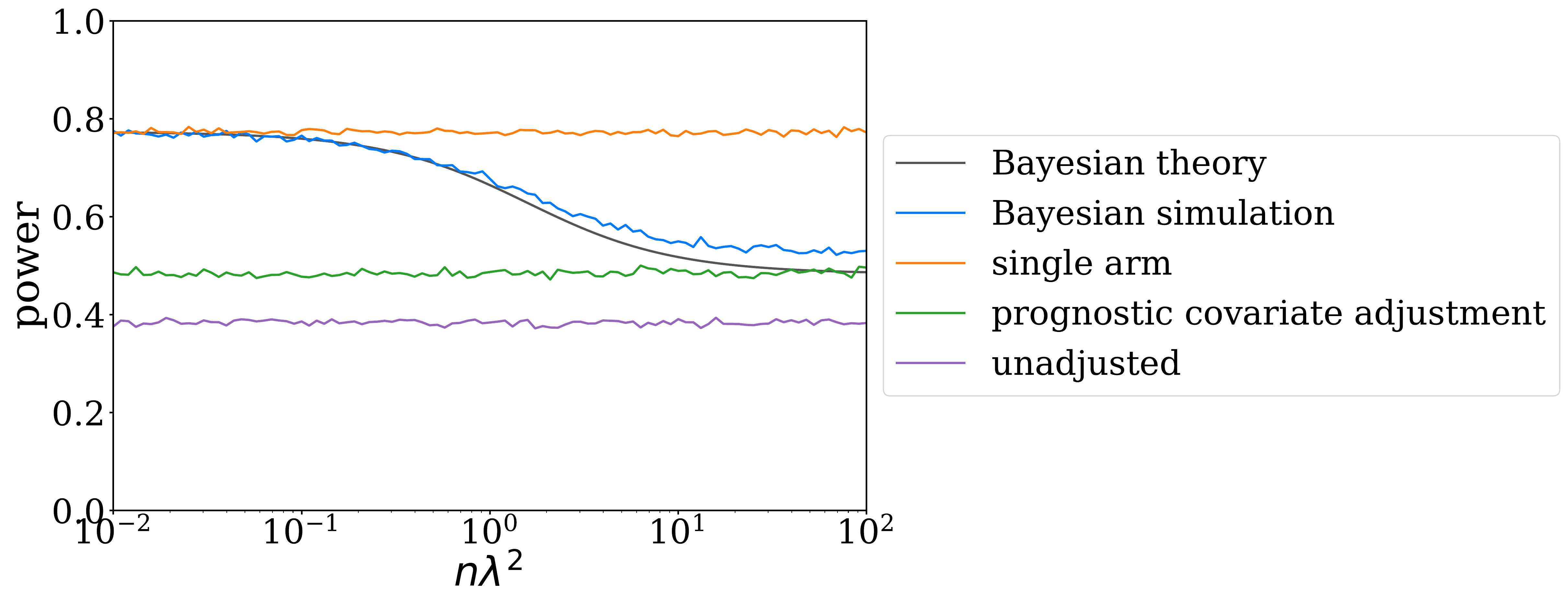}
\end{center}
\caption{{\bf The power of Bayesian prognostic covariate adjustment moves between two simple limits as the prior becomes narrow or flat.} Power as a function of $n \lambda^2$.  The value of $\beta_1$ is chosen so that the prognostic covariate adjustment estimator has a power of 50\%.  Power from the Bayesian estimator is shown along with power from various frequentist estimators: the single-arm estimator and estimators with and without prognostic covariate adjustment.
\label{fig:gridE}}
\end{figure}

\subsection{Analysis of a past clinical trial}

To study the empirical value of the analysis method presented in this work, we employ it in the analysis of a past Alzheimer's Disease (AD) clinical trial.  AD is an ideal setting to apply this method, as the complexity of disease progression and the abundance of historical data from clinical trials and observational studies allow for the creation of prognostic scores.

\subsubsection{Choosing the prior}
\label{sec:lambda}

The width of the prior determines the operating characteristics of the estimator, and a robust prior can allow for type I error rate control while still allowing power gains over prognostic covariate adjustment.

In Section \ref{sec:hist} we noted that learning a value for $\lambda$ is straightforward when there is a large quantity of historical data. In practice, however, there may be few relevant historical trials or the sample sizes for any historical trials may be small. We discuss two possible methods to estimate the prior in that case, one using subject-level data and the other using study-level data.  Study-level approaches are useful if the prognostic score bias ($\beta_0$) is expected to vary substantially between studies.  Subject-level approaches are useful when studies only partially overlap with the trial of interest, or if fewer studies are available.

To estimate a subject-level prior, we pool all relevant historical trials and we extract pairs ($Y_i$, $M_i$) of observed and predicted outcomes for each control subject $i$ in the historical data set.  Let $N$ be the total number of historical subjects, and define $E_{\rm all} = \hat{\beta}_0/\hat{\sigma}$, where:
\begin{align}
    \hat{\beta}_0 &= \frac{1}{N} \sum_i (Y_i - M_i)\\
    \hat{\sigma}^2 &= \frac{1}{N} \sum_i (Y_i - M_i - \hat{\beta}_0)^2
\end{align}
If the distributions of pairs $(Y_i, M_i)$ in all historical trials are similar to that of control subjects in the target trial, and if $N$ is large, $E_{all}$ should be a good estimate of the target ratio: $\beta_0/\sigma$. However, if $N$ is small enough that $1/\sqrt{N}$ is large compared with the true magnitude of the target ratio, there is a non-trivial probability that $|E_{all}|$ will underestimate that magnitude by a large relative margin. For that reason, we take $\lambda$ as $\max(\frac{3}{\sqrt{N}}, |E_{\rm all}|)$.

To estimate a study-level prior, we extract pairs ($Y_{i,j}$, $M_{i,j}$) for each historical trial, $j = 1, \dots, m$. We define $N_j$ as the sample size for each, and we set $E_j = \hat{\beta}_{0,j}/\hat{\sigma}_j$, where:
\begin{align}
    \hat{\beta}_{0,j} &= \frac{1}{N_j} \sum_i (Y_{i,j} - M_{i,j})\\
    \hat{\sigma}_j^2 &= \frac{1}{N_j} \sum_i (Y_{i,j} - M_{i,j} - \hat{\beta}_{0,j})^2
\end{align}

Since we will assume a $N(0, \lambda^2)$ prior for the target ratio (see eq. \ref{eq:beta_eps}), it is natural to model these $E_j$ as IID draws from some normal distribution with mean 0. The variance of that normal distribution is treated as an unknown parameter, and we take $\lambda^2$ as the upper limit of a 95\% maximum likelihood confidence interval for that variance. Thus:
\begin{equation}
    \label{eq:lambda_study} \lambda = \sqrt{\frac{\sum_j E_j^2}{(F^{\chi^2}_m)^{-1}(0.025)}},
\end{equation}
where $F^{\chi^2}_m$ denotes the distribution function of a $\chi^2_m$ random variable.

These two priors are compared in the example below.

\subsubsection{Reanalysis}

To facilitate comparison with prognostic covariate adjustment, we use the same clinical trial data and prognostic model as in Schuler et al.~\cite{freq-methods}.  The prognostic model predicts the progression from baseline in the ADAS-Cog 11 score, a cognitive assessment that is a standard primary endpoint in AD clinical trials for which higher scores indicate increased disease severity.  The prognostic model is a random forest trained on data from 6,919 patients across the AD spectrum. These data were provided by the Alzheimer's Disease Neuroimaging Initiative (ADNI) and the Critical Path for Alzheimer's Disease (CPAD) consortium \cite{cpad1,cpad2}, and included measurements of ADAS-Cog 11 at 6-month or more frequent intervals post-baseline. The ADNI dataset is made up of longitudinal data from 4 sequential large observational studies in Alzheimer's disease, while the CPAD dataset is made up of control arm data from 29 Alzheimer's disease clinical trials. These data also included a common set of 37 baseline covariates, which were imputed to a column mean where missing.  

The clinical trial analyzed was originally conducted to determine if docosahexaenoic acid (DHA) supplementation slows cognitive and functional decline for individuals with mild to moderate Alzheimer's disease \cite{dha}. The study was performed through the Alzheimer's Disease Cooperative Study (ADCS), a consortium of academic medical centers and private Alzheimer disease clinics funded by the National Institute on Aging to conduct clinical trials on Alzheimer disease. A total of 238 subjects in the trial were treated with DHA while 164 were given placebo, and ADAS-Cog 11 was a co-primary endpoint of the 18-month study.  The same covariates used to predict the prognostic score were also measured in this trial.  As in the data used to define the prognostic score, any missing covariate values were mean-imputed in our analysis.

To perform the Bayesian analysis of the DHA trial, a prior on the bias of the prognostic model is needed.  We carry out the two methods suggested in Section~\ref{sec:lambda}, setting the width of the prior using either subject-level or study-level methods.  As discussed, study-level methods can account for inter-study variability and bias, but are limited by the availability of appropriate studies.  In this case, 3 studies have sufficiently similar disease severity to the DHA trial and ADAS-Cog 11 measurements matching the trial duration.  These studies have at total of 1689 enrolled placebo subjects, 1138 of whom have data relevant for determining the prior.  The subject-level approach uses 1217 subjects with similar disease severity to the DHA trial and ADAS-Cog 11 measurements matching the trial duration.  The result of these approaches are priors with standard deviation $\lambda$ of 0.17 (study-level) and 0.029 (subject-level).  As the study has 272 subjects that complete treatment, this amounts to a value of $n \lambda^2$ of 7.5 for the study-level prior and 0.22 for the subject-level prior.

\begin{table}[h!]
\label{tab:results}
\centering
\begin{tabular}{ |c|c| } 
\hline
Analysis & Result \\
\hline
\hline
Unadjusted & -0.10 $\pm$ 2.02 \\
\hline
Prognostic Covariate Adjustment & 0.22 $\pm$ 1.82 \\
\hline 
Bayesian Prognostic Covariate Adjustment (study-level prior) & 0.48 $\pm$ 1.63 \\
\hline 
Bayesian Prognostic Covariate Adjustment (subject-level prior) & 0.79 $\pm$ 1.49 \\
\hline
Single-arm & 1.29 $\pm$ 1.25 \\
\hline 
\end{tabular}
\caption{Results from the reanalysis of the DHA trial. Results are shown in terms of estimated effect $\pm$ 1.96 $\times$ estimated standard deviation.  For the Bayesian estimates, the standard deviation of the posterior probability distribution is used.  Bayesian results are shown for the two choices of priors discussed in the text.}
\end{table}

The results of the analysis of the DHA trial are shown in Table~\ref{tab:results}. Prognostic covariate adjustment with strict type I error rate control provides a substantial reduction in uncertainty compared to an unadjusted analysis. In similar fashion, Bayesian prognostic covariate adjustment with either prior provides substantial reductions in uncertainty compared to the frequentist formulation of prognostic covariate adjustment. None of these analyses, however, suggest that the observed effect is statistically significant, in keeping with the conclusions of the original study~\cite{dha}. By contrast, a single-arm analysis that uses the prognostic score in place of placebo arm outcomes shows an even larger reduction in uncertainty, leading to the observation of a statistically significant effect (in the wrong direction, suggestion the treatment leads to worsening of the disease). This illustrates the potential for bias in single-arm studies using prognostic models to replace concurrent controls arm, which is effectively mitigated using Bayesian prognostic covariate adjustment with a prior on the model bias.

\section{Discussion}
\label{sec:discussion}

In this paper, we have presented a novel method for analyzing clinical trials that combines the advantages of prognostic covariate adjustment with the opportunity to leverage additional external information through a Bayesian formulation to further increase power while maintaining approximate type I error rate control. Given the breadth of patient data becoming available that can be used to build and assess prognostic models, this methodology has the potential to radically improve the efficiency of clinical trials across many disease areas.

Although Bayesian methods are becoming more popular in clinical trials, insistence on ``reasonable control of type I error rates'' by regulatory authorities often limits their use, and typically necessitates extensive simulations to characterize the operating characteristics for each new target trial. Here we showed that Bayesian prognostic covariate adjustment provides reasonable control of the type I error rate under mild assumptions about the prior distribution. Moreover, we provided analytic expressions for the type I error rate and power; technically, the expressions hold in an asymptotic regime, but as we demonstrated with simulations, the formulae are generally accurate across the parameter spectrum.

The Bayesian estimator defined here utilizes a prognostic score for control outcomes as well as a prior on the same outcome.  The operating characteristics of the Bayesian estimator is shown to interpolate between the operating characteristics of the prognostic covariate adjustment method presented by Schuler et al. \cite{freq-methods} and a single-arm trial that treats the prognostic scores as actual placebo outcomes. Not surprisingly, as the prior on the predicted control outcome from the prognostic score becomes very wide, the estimator behaves like prognostic covariate adjustment and, as it becomes increasingly narrow, the estimator behaves more and more like a single-arm trial.  Within the limits of the data used to define and place a prior on the prognostic control outcome, this approach allows practitioners across the risk spectrum for type I error control to utilize this method.

Both prognostic covariate adjustment and its Bayesian counterpart presented here benefit from better prognostic models. That is, the increase in power relative to an unadjusted analysis depends on the correlation of the prognostic score with the outcomes and, in the case of the Bayesian approach, the average bias in the prognostic score on the control group. Therefore, machine learning-based predictive models that are able to learn nonlinear relationships from large historical databases may be particularly useful when coupled with these prognostic covariate adjustment methods.

An important set of assumptions underlies this work; in particular, that the outcomes and prognostic scores are linearly related, the treatment effect is constant, and errors are homoscedastic and normally distributed. We expect that approximate linearity should hold for any reasonable prognostic model that has been engineered to predict placebo outcomes. However, additional research will be required to understand (and perhaps improve) the robustness of Bayesian prognostic covariate adjustment to violations of the other assumptions. Regarding outcome measures that are not continuous, such as binary or time-to-event outcomes, we speculate that generalized linear models could relate such outcomes to prognostic scores, offering a solution that is qualitatively similar to the method of this paper.

Finally there is the question of data availability. The type and quantity of historical data impacts how the prior may be obtained in practice. This paper highlights how study-level methods can produce highly variable values for $\lambda$ when the historical sample sizes are small. On the other hand, subject-level methods are less noisy but fail to capture heterogeneity between historical studies. We hope that future work will unify these methods, identifying reliable methods of prior selection. Besides issues with the historical data, there could be missing measurements for the target trial. For instance, it may not be possible to record for every subject all of the baseline features that the prognostic model requires. Alternatively, subjects may drop out over the course of the target trial, leading to missing outcome data. In either case, it is interesting to consider how our approach should be combined with imputation methods to resolve these issues.

\section{Data Availability}

Certain data used in the preparation of this article were obtained from the Alzheimer’s Disease Neuroimaging Initiative (ADNI) database (\href{url}{adni.loni.usc.edu}). The ADNI was launched in 2003 as a public-private partnership, led by Principal Investigator Michael W. Weiner, MD. The primary goal of ADNI has been to test whether serial magnetic resonance imaging (MRI), positron emission tomography (PET), other biological markers, and clinical and neuropsychological assessment can be combined to measure the progression of mild cognitive impairment (MCI) and early Alzheimer’s disease (AD). For up-to-date information, see \href{url}{www.adni-info.org}.

Certain data used in the preparation of this article were obtained from the Critical Path for Alzheimer's Disease (CPAD) database. In 2008, Critical Path Institute, in collaboration with the Engelberg Center for Health Care Reform at the Brookings Institution, formed the Coalition Against Major Diseases (CAMD), which was then renamed to CPAD in 2018. The Coalition brings together patient groups, biopharmaceutical companies, and scientists from academia, the U.S. Food and Drug Administration (FDA), the European Medicines Agency (EMA), the National Institute of Neurological Disorders and Stroke (NINDS), and the National Institute on Aging (NIA). CPAD currently includes over 200 scientists, drug development and regulatory agency professionals, from member and non-member organizations. The data available in the CPAD database has been volunteered by CPAD member companies and non-member organizations.

Certain data used in the preparation of this article were obtained from the University of California, San Diego Alzheimer’s Disease Cooperative Study Legacy database.

\section{Acknowledgments}

Data collection and sharing for this project was funded in part by the Alzheimer's Disease Neuroimaging Initiative (ADNI) (National Institutes of Health Grant U01 AG024904) and DOD ADNI (Department of Defense award number W81XWH-12-2-0012). ADNI is funded by the National Institute on Aging, the National Institute of Biomedical Imaging and Bioengineering, and through generous contributions from the following: AbbVie, Alzheimer’s Association; Alzheimer’s Drug Discovery Foundation; Araclon Biotech; BioClinica, Inc.; Biogen; Bristol-Myers Squibb Company; CereSpir, Inc.; Cogstate; Eisai Inc.; Elan Pharmaceuticals, Inc.; Eli Lilly and Company; EuroImmun; F. Hoffmann-La Roche Ltd and its affiliated company Genentech, Inc.; Fujirebio; GE Healthcare; IXICO Ltd.; Janssen Alzheimer Immunotherapy Research \& Development, LLC.; Johnson \& Johnson Pharmaceutical Research \& Development LLC.; Lumosity; Lundbeck; Merck \& Co., Inc.; Meso Scale Diagnostics, LLC.; NeuroRx Research; Neurotrack Technologies; Novartis Pharmaceuticals Corporation; Pfizer Inc.; Piramal Imaging; Servier; Takeda Pharmaceutical Company; and Transition Therapeutics. The Canadian Institutes of Health Research is providing funds to support ADNI clinical sites in Canada. Private sector contributions are facilitated by the Foundation for the National Institutes of Health (\href{url}{www.fnih.org}). The grantee organization is the Northern California Institute for Research and Education, and the study is coordinated by the Alzheimer’s Therapeutic Research Institute at the University of Southern California. ADNI data are disseminated by the Laboratory for Neuro Imaging at the University of Southern California.

Data collection and sharing for this project was funded in part by the University of California, San Diego Alzheimer’s Disease Cooperative Study (ADCS) (National Institute on Aging Grant Number U19AG010483).

\bibliographystyle{plain}
\bibliography{bayes}

\appendix

\section{Proofs of technical results}
\label{sec:proofs}

\begin{proof}[Proof of Lemma \ref{lemma:single_power}]
The single-arm analysis compares the following t-statistic against the distribution function of a $t_{pn-1}$ random variable:
\begin{equation}
    \label{eq:t_stat_single}
    T = \frac{\bar{Y}_t - \bar{M}_t}{S_t/\sqrt{pn - 1}}
\end{equation}
where $Y_t$ and $M_t$ are the average values of $Y$ and $M$ on the active treatment arm, and $S_t^2$ denotes the sample variance of $(Y-M)$ on the active treatment arm. The asymptotic result follows when we note that the distribution of $t_{pn - 1}$ approaches a standard Gaussian, and the distribution of $T$ approaches $N((\beta_0 + \beta_1)\sqrt{pn}/\sigma, 1)$.
\end{proof}

\begin{proof}[Proof of Lemma \ref{lemma:post}]
Let
\[ \tilde{\beta} = \begin{pmatrix}
1 & p & 0\\
0 & 1 & 0\\
0 & 0 & 1
\end{pmatrix} \begin{pmatrix}
\beta_0\\
\beta_1\\
\beta_2
\end{pmatrix}. \]
Then (conditional on all prognostic scores and treatment assignments) \eqref{eq:lin_mod} is equivalent to the linear model,
\[ \mathbf{Y} \sim \mathbf{X} \tilde{\beta} + N(0, \sigma^2 I_n), \]
where $\mathbf{Y}$ and $\mathbf{X}$ were defined in the statement of this lemma. Under this parameterization, $\mathcal{P}_\epsilon$ becomes:
\begin{align*}
    \sigma^2 &\sim Inverse-Gamma(\epsilon, \epsilon)\\
    \tilde{\beta} | \sigma^2 &\sim N \left (0, \sigma^2 V_{0,\epsilon} \right )
\end{align*}
where
\[ V_{0, \epsilon} = \begin{pmatrix}
1 & p & 0\\
0 & 1 & 0\\
0 & 0 & 1
\end{pmatrix} \begin{pmatrix}
\lambda^2 & 0 & 0\\
0 & 1/\epsilon & 0\\
0 & 0 & 1/\epsilon
\end{pmatrix} \begin{pmatrix}
1 & p & 0\\
0 & 1 & 0\\
0 & 0 & 1
\end{pmatrix}^T = \begin{pmatrix}
\lambda^2 + p^2/\epsilon & p/\epsilon & 0\\
p/\epsilon & 1/\epsilon & 0\\
0 & 0 & 1/\epsilon
\end{pmatrix}\]

Standard theory on Bayesian linear regression with conjugate priors establishes the posterior distribution $\mathcal{P}'_\epsilon$ as:
\begin{align*}
    \sigma^2 &\sim Inverse-Gamma(n/2 + \epsilon, S_\epsilon^2/2 + \epsilon)\\
    \tilde{\beta} | \sigma^2 &\sim N \left (\mu_\epsilon, \sigma^2 V_\epsilon \right )
\end{align*}
where
\begin{align*}
    V^{-1}_\epsilon &= V_{0, \epsilon}^{-1} + \mathbf{X}^T \mathbf{X} = \begin{pmatrix} 1/\lambda^2 & -p/\lambda^2 & 0\\
    -p/\lambda^2 & p^2/\lambda^2 + \epsilon & 0\\
    0 & 0 & \epsilon
    \end{pmatrix} + \mathbf{X}^T \mathbf{X},\\
    \mu_\epsilon &= V_\epsilon \mathbf{X}^T \mathbf{Y}, \\
    S^2_\epsilon &= \mathbf{Y}^T \mathbf{Y} - \mu_\epsilon^T V_\epsilon^{-1} \mu_\epsilon.
\end{align*}

On the almost sure event that $V^{-1}$ (as defined in the statement of this lemma) has an inverse, $V$, this implies that $V_\epsilon \to V$. Hence on that event, $\mu_\epsilon \to \mu$ and $S_\epsilon^2 \to S^2$ also. The almost sure convergence of these variables is sufficient to ensure that $\mathcal{P}'_\epsilon \to \mathcal{P}'$.

The marginal distribution of $\beta_1$ under $\mathcal{P}'$ follows from standard properties of Normal Inverse-Gamma distributions.
\end{proof}

\begin{proof}[Proof of Theorem \ref{thm:rejection_rate}]
\begin{enumerate}
    \item Starting from \eqref{eq:V}, some algebra gives:
\[ V^{-1} = \begin{pmatrix}
n + 1/\lambda^2 & -p/\lambda^2 & 0\\
-p/\lambda^2 & np(1-p) + p^2/\lambda^2 & np(1-p) (\bar{M}_t - \bar{M}_c)\\
0 & np(1-p) (\bar{M}_t - \bar{M}_c) & n S_M^2
\end{pmatrix}.
\]
where $\bar{M}_t$ and $\bar{M}_c$ are the average values of $M$ observed in the treatment and control groups respectively, and $S_M^2$ is the sample variance of $M$ across all subjects. After inversion and using the fact that $\bar{M}_t - \bar{M}_c = O_P(n^{-1/2})$, we find that the $\beta_1$ row of $V$ is given by:
\[ V_{1 \cdot} = \frac{1}{n} \begin{pmatrix}
    \frac{1}{n\lambda^2(1-p) + 1} & \frac{n\lambda^2 + 1}{n\lambda^2p(1-p) + p} + O_P(n^{-1/2}) & \frac{(\bar{M}_c - \bar{M}_t)(2n\lambda^2(1-p) + 1)}{S_M^2(2n\lambda^2(1-p) + 2)}
\end{pmatrix} \]
\item From Lemma \ref{lemma:post}, we have the following posterior distribution for $\beta_1$:
\[ \mu_1 + \sqrt{V_{11} S^2 / n} \,\, t_n = N(\mu_1, V_{11} S^2 / n) + o(1) \]
Thus, in the given asymptotic, the rejection event coincides almost surely with the event that $|\mu_1| \geq \Phi(1 - \alpha/2) \sqrt{V_{11} S^2 / n}$. To derive the rejection rate, we need to unpack the distributions of $\mu_1$ and $S^2$.

Firstly from \eqref{eq:mu}:
\begin{align*}
    \mu_1 &= (n V_{1 \cdot}) \left (\frac{1}{n} \mathbf{X}^T\mathbf{Y}\right)\\
    &= \begin{pmatrix}
    \frac{1}{n\lambda^2(1-p) + 1} & \frac{n\lambda^2 + 1}{n\lambda^2p(1-p) + p} & \frac{(\bar{M}_c - \bar{M}_t)(2n\lambda^2(1-p) + 1)}{S_M^2(2n\lambda^2(1-p) + 2)}
    \end{pmatrix} \begin{pmatrix}
    (1-p) \bar{Y}_c + p \bar{Y}_t - \bar{M}\\
    p(1-p)(\bar{Y}_t - \bar{Y}_c)\\
    \bar{MY} - \bar{M}\bar{Y}
    \end{pmatrix} + o_P(n^{-1/2})\\
    &= \left \{ \bar{Y}_t - \bar{M} - \hat{\beta}_2(\bar{M}_t - \bar{M}) \right \} - \left ( \frac{n\lambda^2(1 - p)}{n\lambda^2(1-p) + 1} \right ) \left \{ \bar{Y}_c - \bar{M} - \hat{\beta}_2 (\bar{M}_c - \bar{M}) \right \} + o_P(n^{-1/2})
\end{align*}
Here $\bar{Y}_c$ and $\bar{Y}_t$ denote the mean responses observed in the control and treatment groups respectively, $\bar{MY}$ is the mean value of $M \times Y$ across all patients in the trial, and
\begin{equation}
    \hat{\beta}_2 = \frac{ \bar{MY} - \bar{M} \bar{Y} }{S_M^2}.
\end{equation}

After fixing values for $\beta_0, \beta_1, \beta_2$ and $\sigma^2$, we observe that by the weak Law of Large Numbers,
\begin{align*}
    \hat{\beta}_2 &= \frac{1}{S_M^2} \left \{ E(\bar{MY}|M_1, M_2, \dots) - \bar{M} E(Y_1|M_1, M_2, \dots) \right \} + o_P(1)\\
    &= \frac{1}{S_M^2} \left \{ [\beta_0 \bar{M} + p \beta_1 \bar{M}_t + \beta_2 S_M^2 + \bar{M}^2] - \bar{M} [\beta_0 + p \beta_1 + \bar{M}] \right \} + o_P(1)\\
    &= \beta_2 + \frac{p\beta_1(\bar{M}_t - \bar{M})}{S_M^2} + o_P(1)\\
    &= \beta_2 + o_P(1)
\end{align*}
Thus we may replace $\hat{\beta}_2$ by $\beta_2$ in the above expression, while still maintaining a result that is correct up to $o_P(n^{-1/2})$.
\begin{align*}
    \bar{Y}_c - \bar{M} - \beta_2(\bar{M}_c - \bar{M}) &\sim N \left ( \beta_0, \frac{\sigma^2}{(1-p)n} \right )\\
    \bar{Y}_t - \bar{M} - \beta_2(\bar{M}_t - \bar{M}) &\sim N \left ( \beta_0 + \beta_1, \frac{\sigma^2}{pn} \right ).
\end{align*}
and so
\[ \mu_1 = N(\tau, \sigma^2 \hat{V}) + o_P(1) \]

Secondly from \eqref{eq:S^2}:
\begin{align*}
    \frac{1}{n} S^2 &= \frac{1}{n} (\mathbf{Y}^T \mathbf{Y} - \mu^T V^{-1} \mu)\\
    &= \frac{1}{n} \left (\mathbf{Y}^T \mathbf{Y} - (V \mathbf{X} \mathbf{Y})^T \mathbf{X}^T \mathbf{Y} \right)\\
    &= \frac{1}{n} || \mathbf{Y} - \mathbf{X} (\mathbf{X}^T \mathbf{X})^{-1} \mathbf{X}^T \mathbf{Y} ||^2 + \frac{1}{n} \mathbf{Y}^T \mathbf{X} [ (\mathbf{X}^T \mathbf{X})^{-1} - V ] \mathbf{X}^T \mathbf{Y}
\end{align*}
We recognize the first term as the maximum likelihood estimate for estimating $\sigma^2$, which is commonly known to be consistent. For the second term, some algebra gives:
\begin{align*}
    n [ (\mathbf{X}^T \mathbf{X})^{-1} - V ] &= \begin{pmatrix}
    1 & 0 & 0\\
    0 & 1/p(1-p) & 0\\
    0 & 0 & 1/S_M^2
    \end{pmatrix} - \begin{pmatrix}
    \frac{n \lambda^2(1-p) + p}{n \lambda^2(1-p) + 1} & \frac{1}{n \lambda^2(1-p) + 1} & 0\\
    \frac{1}{n \lambda^2 (1-p) + 1} & \frac{n \lambda^2 + 1}{n \lambda^2 p(1-p) + p} & 0\\
    0 & 0 & 1/S_M^2
    \end{pmatrix} + o_P(1)\\
    &= \frac{1}{n\lambda^2(1 - p) + 1}\begin{pmatrix}
    1-p & -1 & 0\\
    -1 & 1/(1-p) & 0\\
    0 & 0 & 0
    \end{pmatrix} + o_P(1)
\end{align*}
Thus, up to terms that are $o_P(1)$:
\begin{align*}
\frac{1}{n} S^2 &= \sigma^2 + \frac{1}{n\lambda^2(1-p) + 1} \begin{pmatrix}
    (1-p)\bar{Y}_c + p\bar{Y}_t - \bar{M}\\
    p(1-p)(\bar{Y}_t - \bar{Y}_c)
    \end{pmatrix}^T \begin{pmatrix}
    1-p & -1\\
    -1 & 1/(1-p)
    \end{pmatrix} \begin{pmatrix}
    (1-p)\bar{Y}_c + p\bar{Y}_t - \bar{M}\\
    p(1-p)(\bar{Y}_t - \bar{Y}_c)
    \end{pmatrix}\\
    &= \sigma^2 + \frac{1}{n\lambda^2(1-p) + 1} \begin{pmatrix}
    (1-p)\bar{Y}_c + p\bar{Y}_t - \bar{M}\\
    p(1-p)(\bar{Y}_t - \bar{Y}_c)
    \end{pmatrix}^T \begin{pmatrix}
    (1-p)(\bar{Y}_c - \bar{M})\\
    -(\bar{Y}_c - \bar{M})
    \end{pmatrix}\\
    &= \sigma^2 + \frac{(1-p)(\bar{Y}_c - \bar{M})^2}{n\lambda^2(1-p) + 1}\\
    &= \sigma^2 + \frac{(1-p) \beta_0^2}{n\lambda^2(1-p) + 1}
\end{align*}

Altogether the asymptotic rejection rate is given by the probability that
\[ |\sigma \sqrt{\hat{V}}Z + \tau| > \Phi^{-1}(1 - \alpha/2) \sqrt{V_{11} \left (\sigma^2 + \frac{(1-p)\beta_0^2}{n\lambda^2(1-p) + 1} \right )} \]
when $Z$ has a standard normal distribution. Noting this that inequality may be rewritten as
\begin{align*}
    |Z + \frac{\tau}{\sigma\sqrt{\hat{V}}}| &> \Phi^{-1}(1 - \alpha/2) \sqrt{ \frac{V_{11}}{\hat{V}} \left (1 + \frac{(1-p)\beta_0^2}{\sigma^2 (n\lambda^2(1-p) + 1)} \right ) },
\end{align*}
we see that this is equal to the rate given in the theorem.
\end{enumerate}
\end{proof}

\section{Prior on the slope parameter}
\label{sec:beta2}

In Section \ref{sec:methods}, we justified a prior which only incorporates information about $\beta_0$ on the grounds that prior information about $\beta_2$ does not offer substantial power gains. Here we formalize that claim.

Consider the following family of Normal Inverse-Gamma priors, $\mathcal{P}_\epsilon^{\lambda_2}$, where the extra variance parameter, $\lambda_2^2$, is used to capture historical information about $\beta_2$. We will assume that all of the priors are ``correct'', in the sense that the prior mean, $\mu_2^0$, coincides with the true value of $\beta_2$.
\begin{align}
    \sigma^2 &\sim Inverse-Gamma(\epsilon, \epsilon)\\
    \begin{pmatrix}
    \beta_0\\
    \beta_1\\
    \beta_2
    \end{pmatrix} | \sigma^2 &\sim N \left ( \begin{pmatrix}
    0\\
    0\\
    \mu_2^0
    \end{pmatrix}, \sigma^2 \begin{pmatrix} \lambda^2 & 0 & 0\\
    0 & 1/\epsilon & 0\\
    0 & 0 & \lambda_2^2
    \end{pmatrix} \right )
\end{align}

Starting from $\mathcal{P}^{\lambda_2}_\epsilon$, let $(\mathcal{P}_\epsilon^{\lambda_2})'$ denote the joint posterior for all parameters. By the same argument as in Lemma \ref{lemma:post}, we see that these posterior distributions converge almost surely to the following random distribution $(\mathcal{P}^{\lambda_2})'$ as $\epsilon \to 0$:
\begin{align}
    \label{eq:sigma_post2}
    \sigma^2 &\sim Inverse-Gamma(n/2, (S^{\lambda_2})^2/2)\\
    \label{eq:beta_post2}
    \begin{pmatrix}
    \beta_0 + p \beta_1\\
    \beta_1\\
    \beta_2
    \end{pmatrix} | \, \sigma^2 &\sim N(\mu^{\lambda_2}, \sigma^2 V^{\lambda_2})
\end{align}
where
\begin{align}
    \label{eq:V_2}
    (V^{\lambda_2})^{-1} &= \begin{pmatrix} 1/\lambda^2 & -p/\lambda^2 & 0\\
    -p/\lambda^2 & p^2/\lambda^2 & 0\\
    0 & 0 & 1/\lambda^2_2
    \end{pmatrix} + \mathbf{X}^T \mathbf{X},\\
    \label{eq:mu_2}
    \mu^{\lambda_2} &= V^{\lambda_2} \left \{ \mathbf{X}^T \mathbf{Y} + \begin{pmatrix} 0\\
    0\\
   \mu_2^0/\lambda^2_2
    \end{pmatrix} \right \}\\
    \label{eq:S_2^2}
    (S^{\lambda_2})^2 &= \mathbf{Y}^T \mathbf{Y} - (\mu^{\lambda_2})^T (V^{\lambda_2})^{-1} \mu^{\lambda_2} + (\mu_2^0)^2/\lambda_2^2.
\end{align}

\begin{theorem}
\label{thm:beta2}
Fix the true values of $\beta_0, \beta_1, \beta_2$ and $\sigma^2$, with $\beta_2 = \mu_2^0$. Consider the limit where $\lambda^2 \to 0$ and $n \to \infty$ such that $n\lambda^2 = O(1)$. The rejection rates associated with each $(\mathcal{P}^{\lambda_2})'$ asymptote to \eqref{eq:rejection_rate} uniformly over all $\lambda_2^2 > 0$.
\end{theorem}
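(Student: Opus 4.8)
The plan is to rerun the proof of Theorem~\ref{thm:rejection_rate}, tracking every appearance of the parameter $\lambda_2^2$. That proof shows the asymptotic rejection rate \eqref{eq:rejection_rate} is a fixed continuous functional of three limiting quantities: the scaled posterior variance $nV_{11}$ of $\beta_1$, the asymptotic sampling law $N(\tau,\sigma^2\hat V)$ of the posterior mean of $\beta_1$, and the probability limit of $S^2/n$. Comparing \eqref{eq:V}--\eqref{eq:S^2} with \eqref{eq:V_2}--\eqref{eq:S_2^2}, the informative prior on $\beta_2$ alters the posterior only by adding $1/\lambda_2^2$ to the $(3,3)$ entry of the precision matrix and $\mu_2^0/\lambda_2^2$ to the third coordinate of the information vector $\mathbf{X}^T\mathbf{Y}$ (coordinate $3$ being the coefficient on the centered score $M_i-\bar M$). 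I would therefore show that each of the three limiting quantities is unchanged, with convergence \emph{uniform} over $\lambda_2^2\in(0,\infty)$; continuity of the functional then transfers the uniformity to the rejection rate.

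First I would treat the variance. Partitioning the precision matrix into the $\{\beta_0,\beta_1\}$ block and the single $\beta_2$ coordinate and applying the Schur-complement formula, the prior on $\beta_2$ influences the $\beta_1$-block of $V^{\lambda_2}$ only through the off-diagonal coupling $np(1-p)(\bar M_t-\bar M_c)$ that appears in the matrix $V^{-1}$ computed in the proof of Theorem~\ref{thm:rejection_rate}. Since $\bar M_t-\bar M_c=O_P(n^{-1/2})$, this coupling is $O_P(n^{1/2})$, an order below the $\Theta(n)$ diagonal entries, so the Schur correction perturbs the $\beta_1$-block only at relative order $O_P(n^{-1})$. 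Crucially, that correction is monotone in $1/\lambda_2^2$ and is dominated by its value at $\lambda_2^2\to\infty$ (the non-informative case of Theorem~\ref{thm:rejection_rate}), so the bound is uniform in $\lambda_2^2$, giving $nV_{11}^{\lambda_2}\to_P$ the limit \eqref{eq:V_11}.

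The crux is the mean. Writing the normal equations $(V^{\lambda_2})^{-1}\mu^{\lambda_2}=\mathbf{X}^T\mathbf{Y}+(0,0,\mu_2^0/\lambda_2^2)^T$ coordinatewise, the posterior slope $\mu_3^{\lambda_2}$ enters the equation for the posterior mean of $\beta_1$ only through a single coupling term proportional to $np(1-p)(\bar M_t-\bar M_c)\,\mu_3^{\lambda_2}$, whose coefficient is $O_P(n^{-1/2})$, while the third equation exhibits $\mu_3^{\lambda_2}$ as a convex combination
\[ \mu_3^{\lambda_2}=w\,\tilde\beta_2+(1-w)\mu_2^0, \qquad w=\frac{nS_M^2}{nS_M^2+1/\lambda_2^2}\in[0,1], \]
where $\tilde\beta_2$ is a data-based slope satisfying $\tilde\beta_2=\beta_2+O_P(n^{-1/2})$. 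Because the prior is correct, $\mu_2^0=\beta_2$, and $w\le 1$, this forces $\mu_3^{\lambda_2}=\beta_2+O_P(n^{-1/2})$ uniformly in $\lambda_2^2$. Replacing $\mu_3^{\lambda_2}$ by its non-informative counterpart (also $\beta_2+O_P(n^{-1/2})$) therefore changes the posterior mean of $\beta_1$ by only $O_P(n^{-1/2})\times O_P(n^{-1/2})=o_P(n^{-1/2})$, uniformly, so the sampling law $N(\tau,\sigma^2\hat V)$ is preserved. The same bookkeeping handles $S^2$: from \eqref{eq:S_2^2} and the normal equations one gets $(S^{\lambda_2})^2-S^2=-(\mu^{\lambda_2}-\mu)^T\mathbf{X}^T\mathbf{Y}+(\mu_2^0-\mu_3^{\lambda_2})\mu_2^0/\lambda_2^2$, where the first term is $o_P(n)$ (every coordinate of $\mu^{\lambda_2}-\mu$ is $O_P(n^{-1/2})$ and $\mathbf{X}^T\mathbf{Y}=O_P(n)$) and the second equals $-\mu_2^0\tfrac{nS_M^2}{nS_M^2\lambda_2^2+1}(\tilde\beta_2-\beta_2)$, again $o_P(n)$ since its prefactor is $O(n)$ and $\tilde\beta_2-\beta_2=O_P(n^{-1/2})$. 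Hence $(S^{\lambda_2})^2/n$ shares the limit of $S^2/n$.

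The main obstacle is the uniformity over the whole range $\lambda_2^2\in(0,\infty)$, which must simultaneously cover the extreme $\lambda_2^2\to\infty$ (recovering Theorem~\ref{thm:rejection_rate}) and $\lambda_2^2\to 0$ (pinning $\beta_2$ at the prior mean). The delicacy is that the prior-mean drift and the off-diagonal contribution to the posterior mean of $\beta_1$ are each individually $O_P(n^{-1/2})$ -- the same order as the sampling fluctuations of $\mu_1$ -- so crude bounds do not show them negligible, and in the $S^2$ term the prefactor $nS_M^2/(nS_M^2\lambda_2^2+1)$ is unbounded as $\lambda_2^2\to 0$. It is precisely the correctness of the prior, which represents $\mu_3^{\lambda_2}$ as a $[0,1]$-weighted average shrinking toward the true $\beta_2$, that forces these effects to vanish at order $n^{-1/2}$ (respectively $o_P(n)$) uniformly in $\lambda_2^2$; establishing that convex-combination representation and checking that the weights are controlled independently of $\lambda_2^2$ is the heart of the argument. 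Were the prior biased, the $\lambda_2^2\to 0$ limit would inject an $O_P(1)$ error into $\mu_3^{\lambda_2}$ and hence an $O_P(n^{-1/2})$ bias into $\mu_1^{\lambda_2}$, and the uniform conclusion would fail.
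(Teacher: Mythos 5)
Your proposal is correct and follows essentially the same route as the paper's proof: the convex-combination representation $\mu_3^{\lambda_2}=w\hat{\beta}_2+(1-w)\mu_2^0$ with $w\in[0,1]$ is exactly the paper's $\hat{\beta}_2^{\lambda_2}$ together with the bound $|\hat{\beta}_2^{\lambda_2}-\beta_2|\leq|\hat{\beta}_2-\beta_2|$ (and the eigenvalue bound on $I-V^{\lambda_2}V^{-1}$), which is what delivers uniformity in $\lambda_2^2$ there as well. Your identity for $(S^{\lambda_2})^2-S^2$ and the treatment of $nV_{11}^{\lambda_2}$ are equivalent rearrangements of the paper's computations, so no substantive difference remains.
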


\begin{proof}
By the same argument as in the proof of Theorem \ref{thm:rejection_rate}, the $\beta_1$ row of $V^{\lambda_2}$ is given by:
\[ V^{\lambda_2}_{1 \cdot} = \frac{1}{n} \begin{pmatrix}
    \frac{1}{n\lambda^2(1-p) + 1} & \frac{n\lambda^2 + 1}{n\lambda^2p(1-p) + p} + O_P(n^{-1/2}) & \frac{(\bar{M}_c - \bar{M}_t)(2n\lambda^2(1-p) + 1)}{(S_M^2 + 1/n\lambda_2^2)(2n\lambda^2(1-p) + 2)}
\end{pmatrix} \]
In particular, we see that $n(V^{\lambda_2}_{11} - V_{11}) \to_P 0$. Thus, in the given limit, the rejection event coincides almost surely with the event that $|\mu_1^{\lambda_2}| \geq \Phi(1 - \alpha/2) \sqrt{V_{11} (S^{\lambda_2})^2 / n}$.

From \eqref{eq:S_2^2}:
\begin{align*}
    (S^{\lambda_2})^2 &= \mathbf{Y}^T \mathbf{Y} - (\mu^{\lambda_2})^T (V^{\lambda_2})^{-1} \mu^{\lambda_2} + (\mu_2^0)^2/\lambda_2^2\\
    &= \mathbf{Y}^T \mathbf{Y} + \mu_2^0 (V^{\lambda_2})^{-1}_{2 \cdot} \mu^{\lambda_2} - \left \{ \mu^{\lambda_2} + \begin{pmatrix}
    0\\
    0\\
    \mu_2^0
    \end{pmatrix} \right \}^T (V^{\lambda_2})^{-1} \mu^{\lambda_2} + (\mu_2^0)^2/\lambda_2^2\\
    &= \mathbf{Y}^T \mathbf{Y} + \mu_2^0 \left \{ \begin{pmatrix}
    0\\
    0\\
    1/\lambda_2^2
    \end{pmatrix}^T + (V^{-1})_{2 \cdot} \right \} \mu^{\lambda_2} - \left \{ \mu^{\lambda_2} + \begin{pmatrix}
    0\\
    0\\
    \mu_2^0
    \end{pmatrix} \right \}^T \left \{ \mathbf{X}^T \mathbf{Y} + \begin{pmatrix} 0\\
    0\\
   \mu_2^0/\lambda^2_2
    \end{pmatrix} \right \} + (\mu_2^0)^2/\lambda_2^2\\
    &= \mathbf{Y}^T \mathbf{Y} + \mu_2^0 (V^{-1})_{2 \cdot} \mu^{\lambda_2} - \left \{ \mu^{\lambda_2} + \begin{pmatrix}
    0\\
    0\\
    \mu_2^0
    \end{pmatrix} \right \}^T \mathbf{X}^T \mathbf{Y}\\
    &= \mathbf{Y}^T \mathbf{Y} + \mu_2^0 (V^{-1})_{2 \cdot} \mu - \left \{ \mu + \begin{pmatrix}
    0\\
    0\\
    \mu_2^0
    \end{pmatrix} \right \}^T V^{-1} \mu + \left \{ \mu_2^0 (V^{-1})_{2 \cdot} - \mathbf{Y}^T \mathbf{X} \right \} (\mu^{\lambda_2} - \mu)\\
    &= S^2 + \left \{ \mu_2^0 (V^{-1})_{2 \cdot} - \mathbf{Y}^T \mathbf{X} \right \} (\mu^{\lambda_2} - \mu)
\end{align*}

We find that the second term here is uniformly $o_P(1)$ when $\beta_2 = \mu_2^0$, because:
\begin{align*}
    \mu^{\lambda_2} &= V^{\lambda_2} \left \{ \mathbf{X}^T \mathbf{Y} + \begin{pmatrix} 0\\
    0\\
   \mu_2^0/\lambda^2_2
    \end{pmatrix} \right \}\\
    &= V^{\lambda_2} \left \{ V^{-1} \mu + (V^{\lambda_2} - V^{-1}) \begin{pmatrix} \cdot\\
    \cdot\\
    \beta_2 \end{pmatrix} \right \}\\
    &= V^{\lambda_2} V^{-1} \mu + (I - V^{\lambda_2} V^{-1}) \begin{pmatrix} \mu_0\\
    \mu_1\\
    \beta_2 \end{pmatrix}
\end{align*}
where a dot indicates that that element of the vector is arbitrary. Letting $K^{\lambda_2}$ denote the largest eigenvalue of $(I - V^{\lambda_2} V^{-1})$, this implies: $|\mu^{\lambda_2} - \mu| \leq K^{\lambda_2} |\mu_2 - \beta_2| \leq |\mu_2 - \beta_2| = o_P(1)$.

Thus, in the given limit, the rejection event coincides almost surely with the event that $|\mu_1^{\lambda_2}| \geq \Phi(1 - \alpha/2) \sqrt{V_{11} S^2 / n}$. From \eqref{eq:mu_2}:
\begin{align*}
    \mu_1^{\lambda_2} &= (n V^{\lambda_2}_{1 \cdot}) \left \{ \frac{1}{n} \mathbf{X}^T \mathbf{Y} + \begin{pmatrix} 0\\
    0\\
   \mu_2^0/n\lambda^2_2
    \end{pmatrix} \right \}\\
    &= \begin{pmatrix}
    \frac{1}{n\lambda^2(1-p) + 1} & \frac{n\lambda^2 + 1}{n\lambda^2p(1-p) + p} & \frac{(\bar{M}_c - \bar{M}_t)(2n\lambda^2(1-p) + 1)}{(S_M^2 + 1/n\lambda_2^2)(2n\lambda^2(1-p) + 2)}
    \end{pmatrix} \begin{pmatrix}
    (1-p) \bar{Y}_c + p \bar{Y}_t - \bar{M}\\
    p(1-p)(\bar{Y}_t - \bar{Y}_c)\\
    S_M^2 \hat{\beta}_2 + \mu_2^0/n\lambda^2_2
    \end{pmatrix} + o_P(n^{-1/2})\\
    &= \left \{ \bar{Y}_t - \bar{M} - \hat{\beta}_2^{\lambda_2} (\bar{M}_t - \bar{M}) \right \} - \left ( \frac{n\lambda^2(1 - p)}{n\lambda^2(1-p) + 1} \right ) \left \{ \bar{Y}_c - \bar{M} - \hat{\beta}_2^{\lambda_2} (\bar{M}_c - \bar{M}) \right \} + o_P(n^{-1/2})
\end{align*}
where
\[ \hat{\beta}_2^{\lambda_2} = \frac{S_M^2 \hat{\beta}_2 + \mu_2^0/n\lambda^2_2}{S_M^2 + 1/n\lambda_2^2}. \]
After fixing values for $\beta_0, \beta_1, \beta_2$ and $\sigma^2$ with $\beta_2 = \mu_2^0$, we see that $|\hat{\beta}_2^{\lambda_2} - \beta_2| \leq |\hat{\beta}_2 - \beta_2| \to_P 0$. Hence, just as for Theorem \ref{thm:rejection_rate}, we find that $\mu_1^{\lambda_2} \sim N(\tau, \sigma^2 \hat{V})$, up to an error that is uniformly $o_P(1)$.

The remainder of this proof is identical to that of Theorem \ref{thm:rejection_rate}.

\end{proof}

\section{Visualizations of asymptotic theory}
\label{sec:visual}

This appendix provides some additional visualizations on the operating characteristics of our method. The figures here include both the asymptotic rejection rate given in Theorem \ref{thm:rejection_rate} and simulated rates, which have been computed using the approach described in Section \ref{sec:simulations}.

\begin{figure}[hbt!]
\begin{center}
\includegraphics[width=5.5in]{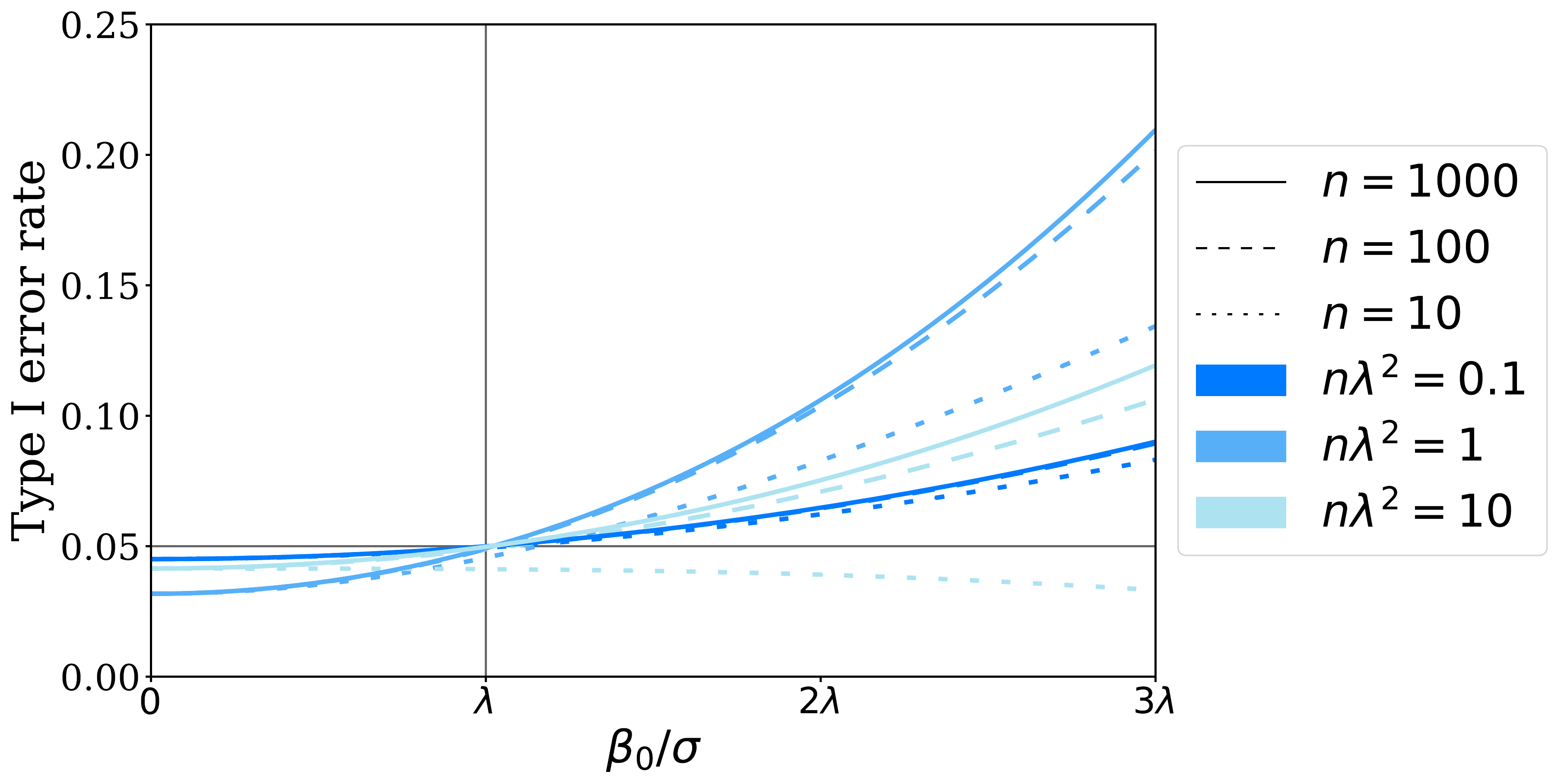}
\end{center}
\caption{{\bf The theoretical type I error rate is bounded by \boldsymbol{$\alpha$} when \boldsymbol{$\beta_0 / \sigma < \lambda$}}. The theoretical type I error rate as a function of $\beta_0 / \sigma$ in units of $\lambda$ for different values of $n$ and $n\lambda^2$, with $\alpha = 0.05$.  The line at $\beta_0 / \sigma = \lambda$ shows the region below which (for $\beta_0 / \sigma < \lambda$) the type I error rate is bounded above by $\alpha$.
\label{fig:type1_theory}}
\end{figure}

Figure~\ref{fig:type1_theory} shows that across a wide range of values for $n$ and $\lambda$, the condition $|\beta_0/\sigma| \leq \lambda$ ensures the asymptotic type I error rate is approximately at most $\alpha$.

In addition to the linear generative model discussed in the text, we also investigate type I error when the model assumptions do not hold. Here we use a generative model where the relationship between the prognostic score and outcomes is nonlinear, which is parameterized so that $\beta_0$ and $\sigma^2$ still represent the bias and residual variance of the prognostic model respectively:
\begin{align}
    M &\sim {\cal N} (0, 1) \,, \\
    Y_C \,|\, M &\sim \beta_0 + \beta_2 M^3 + \sigma {\cal N} (0, 1) \,, \nn \\
    Y_T \,|\, M &\sim \beta_0 + \beta_1 + \beta_2 M^3 + \sigma {\cal N} (0, 1) \,. \nn
\end{align}

\begin{figure}[htp!]
\begin{center}
\includegraphics[width=6.5in]{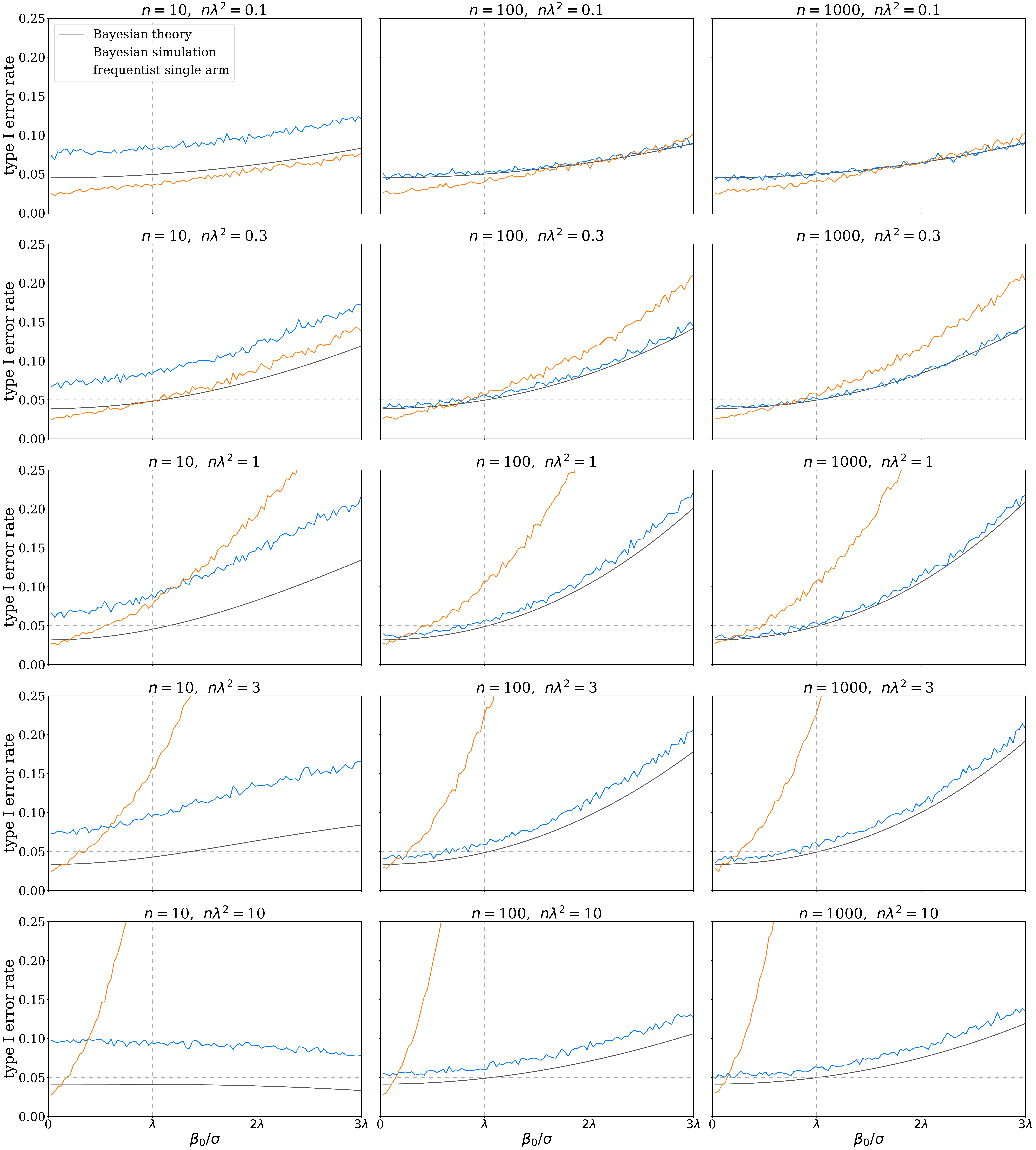}
\end{center}
\caption{{\bf Comparing simulations and theory for type I error when the model assumptions hold.} Type I error rate as a function of $\beta_0 / \sigma$ for various choices of $n$ and $n \lambda^2$.  The type I error rate of the Bayesian estimator defined here is compared between simulations and theory, and against that of the single-arm estimator.  Note that the x-axis is marked in units of $\lambda$, which differs between plots.
\label{fig:gridAdetail}}
\end{figure}

\begin{figure}[htp!]
\begin{center}
\includegraphics[width=6.5in]{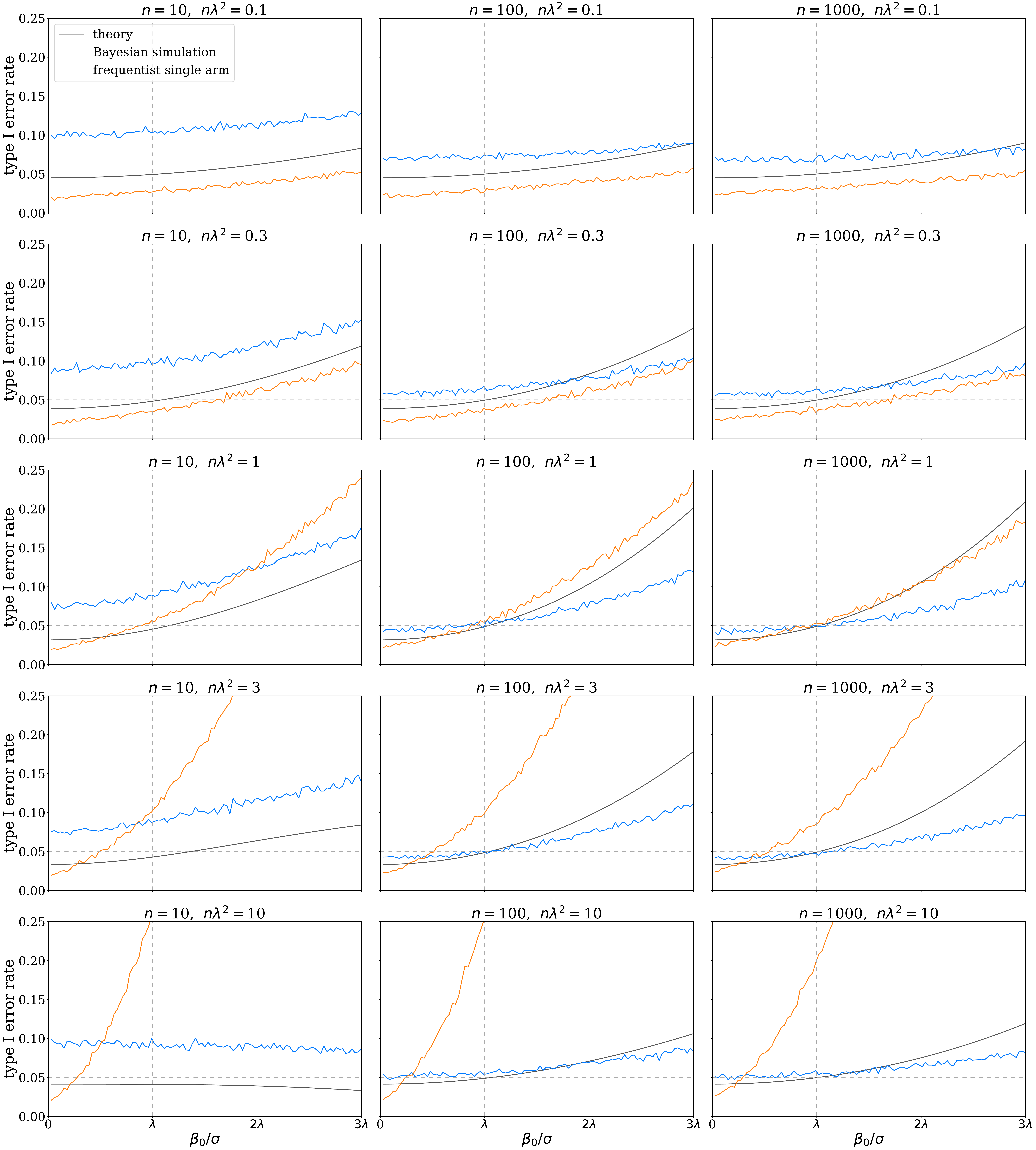}
\end{center}
\caption{{\bf Comparing simulations and theory for type I error when the model assumptions do not hold.} Type I error rate for the same parameters as \Fig{gridAdetail}, except using samples generated in the nonlinear case.  The type I error rate of the Bayesian estimator defined here is compared between simulations and theory, and against that of the single-arm estimator.  Note that the x-axis is marked in units of $\lambda$, which differs between plots.
\label{fig:gridB}}
\end{figure}

In \Fig{gridAdetail}, the type I error rate is shown as a function of $\beta_0 / \sigma$ for various choices of $n$ and $n \lambda^2$ for the linear generative model case.  This is a more general version of \Fig{gridA} in the main text.

When the linearity assumptions of the estimator do not hold, the theory can be inaccurate.  In \Fig{gridB} the type I error rate is shown for the same parameters as \Fig{gridAdetail}, using a nonlinear generative model for prognostic scores and outcomes.  The theoretical predictions for error are large, and as expected the estimator is only appropriate for cases where the linearity assumptions hold.

\subsection{Invariance of Type I error}
\label{sec:type_I_vis}

Previously, we noted how the asymptotic type I error rate of the Bayesian prognostic covariate adjustment estimator does not depend on $\beta_2$ or on $\sigma^2$ except through the ratio $\beta_0/\sigma$. This fact is demonstrated in \Figs{gridC}{gridD}. In each case, the common theory prediction is compared against the dependence of the type I error rate on $\beta_0$ for different choices of the prognostic-outcome correlation (\Fig{gridC}) and $\beta_2$ (\Fig{gridD}). In both cases the theoretical predictions agree well with the simulations, with the largest discrepancy coming at small $n$.

\begin{figure}[htp!]
\begin{center}
\includegraphics[width=5.5in]{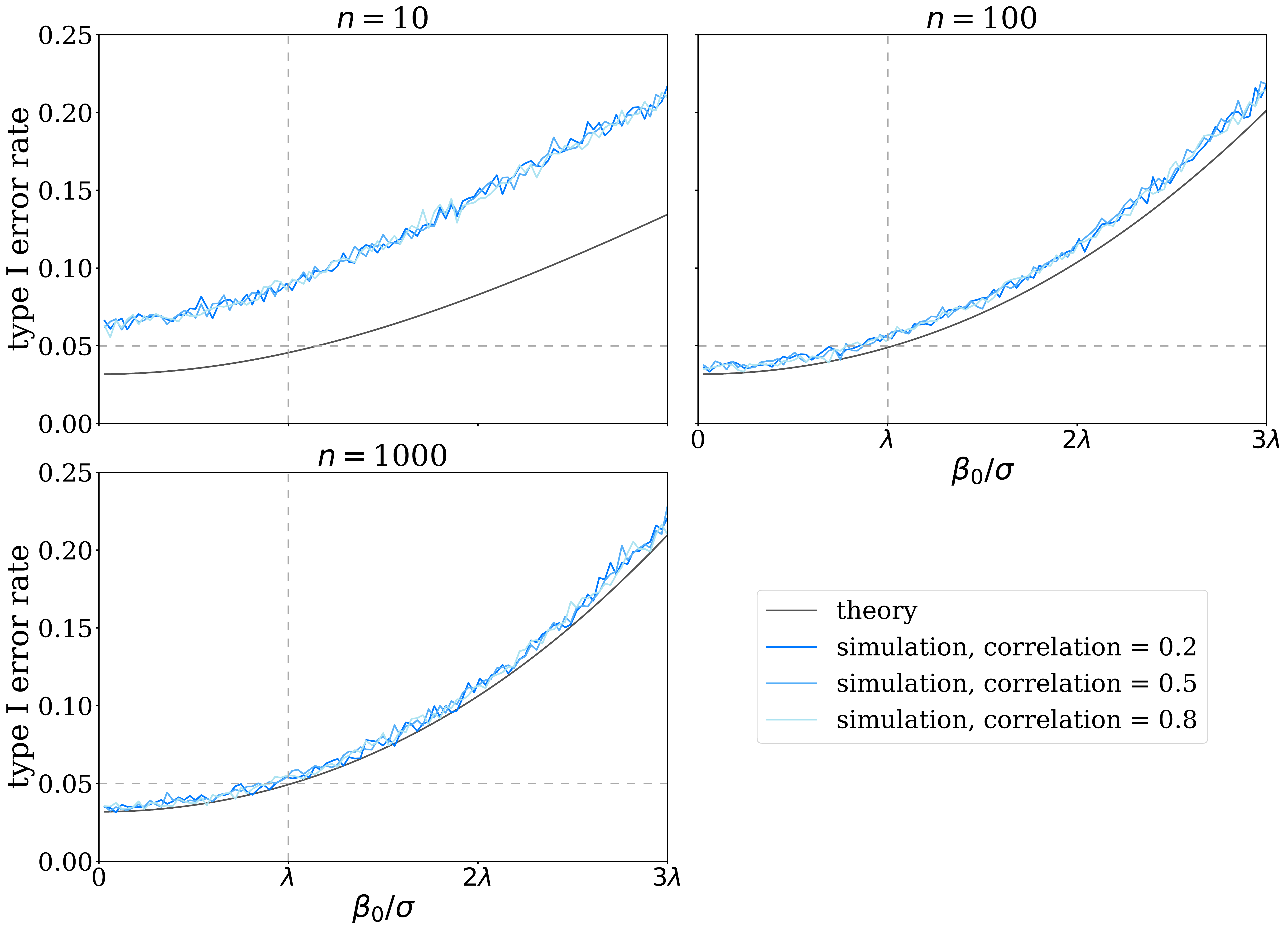}
\end{center}
\caption{{\bf Independence of type I error rate on the prognostic-outcome correlation.} Type I error rate as a function of $\beta_0 / \sigma$ in units of $\lambda$ is shown for several different choices of the number of subjects and the correlation between prognostic scores and outcomes.  Note that the x-axis is marked in units of $\lambda$, which differs between plots as $\lambda$ is held so that $n \lambda^2 = 1$.
\label{fig:gridC}}
\end{figure}

\begin{figure}[tp!]
\begin{center}
\includegraphics[width=5.5in]{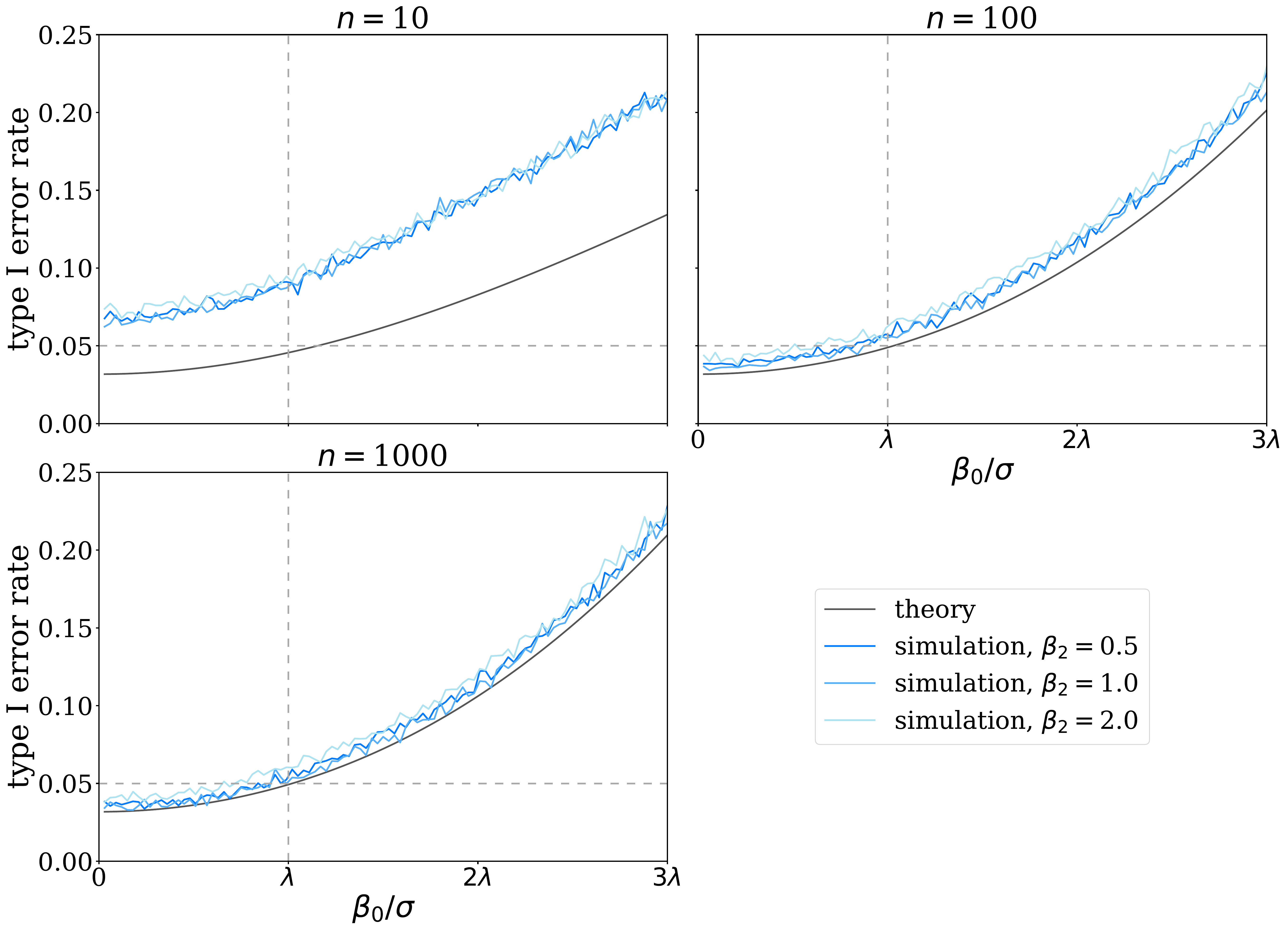}
\end{center}
\caption{{\bf Independence of type I error rate on the value of $\beta_2$.} Type I error rate as a function of $\beta_0 / \sigma$ in units of $\lambda$ is shown for several different choices of the number of subjects and $\beta_2$.  Note that the x-axis is marked in units of $\lambda$, which differs between plots as $\lambda$ is held so that $n \lambda^2 = 1$.
\label{fig:gridD}}
\end{figure}

\subsection{Impact of the prior on power}
\label{sec:power_vis}

\begin{figure}[htp!]
\begin{center}
\includegraphics[width=5.5in]{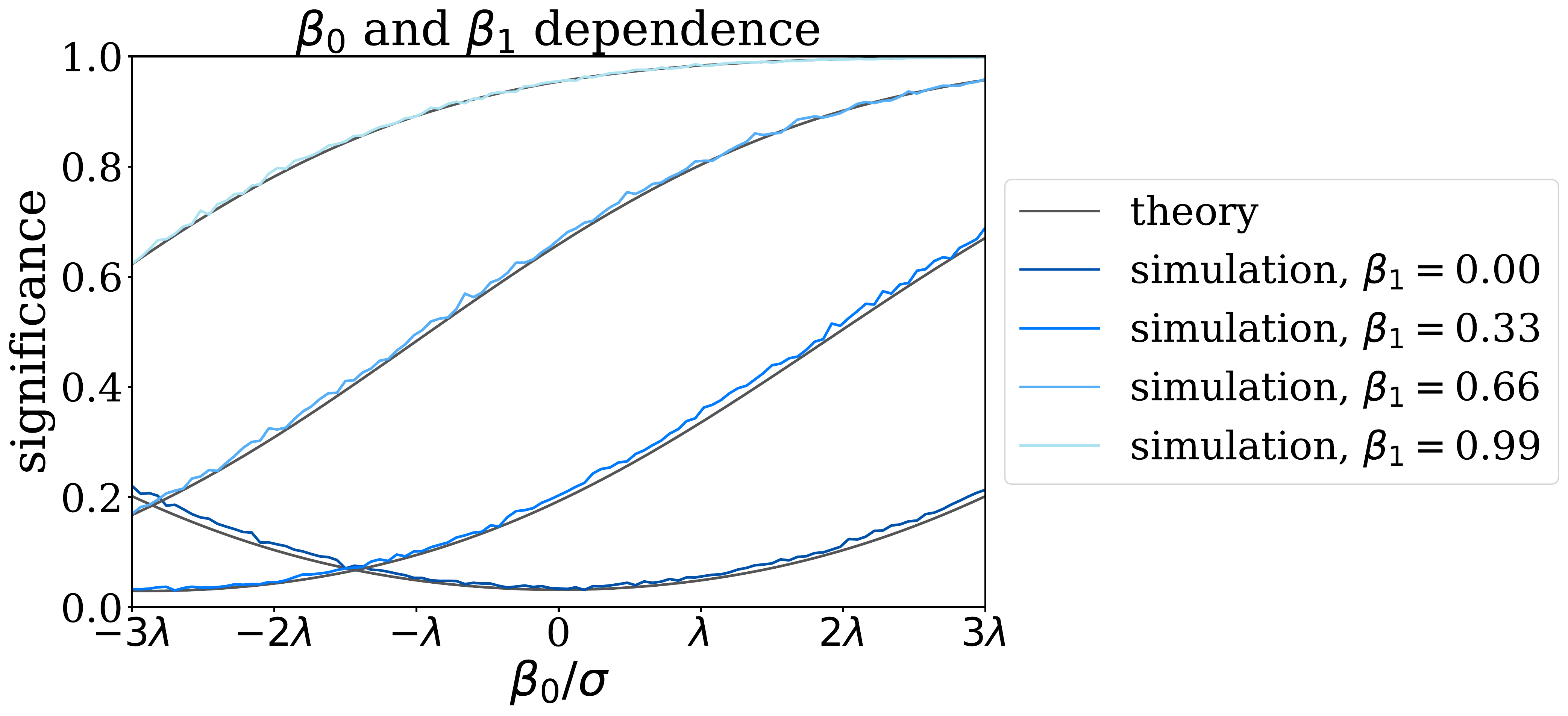}
\end{center}
\caption{{\bf Dependence of significance on $\beta_0$ and $\beta_1$.} Significance is shown as a function of $\beta_0 / \sigma$ in units of $\lambda$ for different values of $\beta_1$. $n = 100$ is used.
\label{fig:gridF}}
\end{figure}

The discussion in the main text highlights how the power depends on the precision of the prior, interpolating between prognostic covariate adjustment when the prior is concentrated tightly around zero and the single-arm analysis when the prior is flatter. In fact the power depends not just on the width of the prior, but also on its accuracy.  In \Fig{gridF} the rejection rate is shown as a function of $\beta_0 / \sigma$ for different choices of $\beta_1$ (including $\beta_1 = 0$, for which this is the type I error rate and is symmetric in $\beta_0$).  The theoretical predictions agree well with the simulations and show that unlike type I error, power is asymmetric in $\beta_0$ depending on whether the prognostic scores are biased in or away from the direction of $\beta_1$.  

\begin{figure}[htp!]
\begin{center}
\includegraphics[width=5.5in]{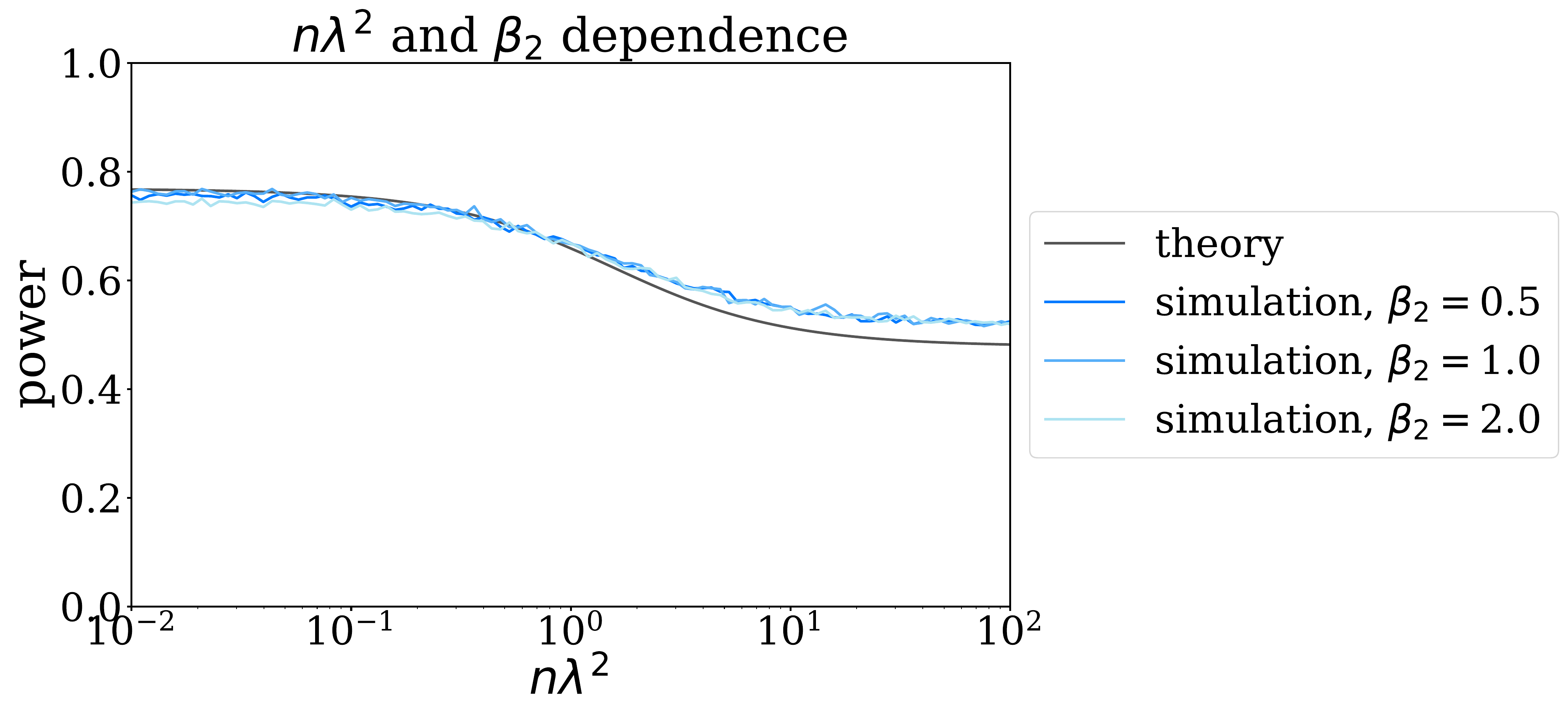}
\end{center}
\caption{{\bf Independence of power on $\beta_2$.} Power as a function of $n \lambda^2$ for different values of $\beta_2$. $n = 100$ is used.  
\label{fig:gridG}}
\end{figure}

Finally, power is predicted to be independent of $\beta_2$, a fact confirmed by simulations. Fig.~\ref{fig:gridG} shows the power as a function of $n \lambda^2$ for different values of $\beta_2$. The theoretical predictions agree well with the simulations except for a slight increase in power at large $n \lambda^2$, and the simulations show that power is independent of $\beta_2$ to a high degree.

\end{document}